\documentclass[final]{siamltex}
\usepackage{cite}
\usepackage{algorithm}
\usepackage{algorithmic}
\usepackage{amsmath,amssymb}
\usepackage{bm}
\usepackage{booktabs}
\usepackage{epsfig, epstopdf}
\usepackage{float}
\usepackage{graphics}
\usepackage{graphicx}
\usepackage{multido}
\usepackage{psfrag}
\usepackage{pstricks}
\usepackage{pst-all}
\usepackage{pst-poly}
\usepackage{setspace}
\usepackage{subfigure}
\usepackage{xypic}
\usepackage{color}
\usepackage{mathdots}

 \newtheorem{example}[theorem]{Example}

\newcommand{\ra}[1]{\renewcommand{\arraystretch}{#1}}
\newcommand{\coord}[1]{\widehat{#1}}
\newcommand{\mypar}[1]{{\bf #1.}}

\newcommand{\C}{\mathbb{C}}
\newcommand{\A}{\mathcal{A}}
\newcommand{\B}{\mathcal{B}}
\newcommand{\M}{\mathcal{M}}
\newcommand{\N}{\mathcal{N}}
\newcommand{\Z}{\mathbb{Z}}

\newcommand{\mylist}[1]{{\big\{ #1 \big\}}}
\newcommand{\myList}[1]{{\Big\{ #1 \Big\}}}
\newcommand{\gen}[1]{{\big\langle #1 \big\rangle}}

\newcommand{\floor}[1]{\lfloor #1 \rfloor}

\newcommand{\PT}{\mathcal{P}}
\newcommand{\DFT}{\textnormal{DFT}}

\newcommand{\DCTI}{\textnormal{DCT-1}}
\newcommand{\DSTI}{\textnormal{DST-1}}
\newcommand{\DCTII}{\textnormal{DCT-2}}
\newcommand{\DSTII}{\textnormal{DST-2}}
\newcommand{\DCTIII}{\textnormal{DCT-3}}
\newcommand{\DSTIII}{\textnormal{DST-3}}
\newcommand{\DCTIV}{\textnormal{DCT-4}}
\newcommand{\DSTIV}{\textnormal{DST-4}}
\newcommand{\pol}[1]{{#1}'}

\title{Algebraic Signal Processing Theory:
Cooley-Tukey Type Algorithms for Polynomial Transforms Based on
Induction\thanks{This work was supported in part by NSF grant CCF-0634967.}}

\author{    Aliaksei Sandryhaila\thanks{Department of
            Electrical and Computer Engineering, Carnegie Mellon University, Pittsburgh, PA 15213
                (asandryh@andrew.cmu.edu, pueschel@ece.cmu.edu). }
            \and Jelena Kova\v{c}evi\'{c}
            \thanks{Departments of Biomedical Engineering and Electrical and Computer Engineering,
            Carnegie Mellon University, Pittsburgh, PA 15213 (jelenak@cmu.edu). }
        \and Markus P\"{u}schel\footnotemark[2]
        }

\begin{document}

\maketitle


\begin{abstract}
A {\em polynomial transform} is the multiplication of an input vector
$x\in\C^n$ by a matrix $\PT_{b,\alpha}\in\C^{n\times n},$ whose
$(k,\ell)$-th element is defined as $p_\ell(\alpha_k)$ for polynomials
$p_\ell(x)\in\C[x]$ from a list $b=\{p_0(x),\dots,p_{n-1}(x)\}$ and
sample points $\alpha_k\in\C$ from a list
$\alpha=\{\alpha_0,\dots,\alpha_{n-1}\}$. Such transforms find
applications in the areas of signal processing, data compression, and
function interpolation. Important examples include the discrete
Fourier and cosine transforms. In this paper we introduce a novel
technique to derive fast algorithms for polynomial transforms. The
technique uses the relationship between polynomial transforms and the
representation theory of polynomial algebras. Specifically, we derive
algorithms by decomposing the regular modules of these algebras as a
stepwise induction. As an application, we derive novel $O(n\log{n})$
general-radix algorithms for the discrete Fourier transform and the
discrete cosine transform of type 4.
\end{abstract}

\begin{keywords}
Polynomial transform, matrix factorization, algebra, module, fast
algorithm, fast Fourier transform, discrete Fourier transform,
discrete cosine transform, DFT, FFT, DCT, DST.
\end{keywords}

\begin{AMS}
Primary: 42C05, 42C10, 33C80, 33C90, 65T50, 65T99, 15B99.
Secondary: 15A23, 13C05.
\end{AMS}

\pagestyle{myheadings}
\thispagestyle{plain}
\markboth{A.~SANDRYHAILA, J.~KOVA\v{C}EVI\'{C},
  AND M.~P\"{U}SCHEL}{FAST POLYNOMIAL TRANSFORMS BASED ON INDUCTION}


\section{Introduction}

\subsection{Polynomial transforms}

Let $b=\mylist{p_0(x),\dots,p_{n-1}(x)}\subset\C[x]$ be a list\footnotemark\ of complex polynomials
that form a basis of the space of polynomials of degree less than $n,$
and let $\alpha=\mylist{ \alpha_0,\dots,\alpha_{n-1}}\subset\C$ be a list of
distinct complex sample points.
%
\footnotetext[1]{%
Hereafter, we view lists as ordered sets, i.e., without
duplicate elements.
}%
A \emph{polynomial transform} is the matrix-vector product
$\PT_{b,\alpha}x$, where $x\in\C^n$ and $\PT_{b,\alpha}$ is the
$n\times n$ matrix whose $(k,\ell)$-th element is defined as
$p_\ell(\alpha_k)$, $0\leq k,\ell < n$:
\begin{equation}
\label{eq:PT_matrix}
\PT_{b,\alpha} =
\begin{pmatrix}
p_0(\alpha_0) & p_1(\alpha_0) & \dots & p_{n-1}(\alpha_0) \\
p_0(\alpha_1) & p_1(\alpha_1) & \dots & p_{n-1}(\alpha_1) \\
\vdots & \vdots && \vdots \\
p_0(\alpha_{n-1}) & p_1(\alpha_{n-1}) & \dots & p_{n-1}(\alpha_{n-1})
\end{pmatrix}.
\end{equation}
By a slight abuse of notation, we also refer to $\PT_{b,\alpha}$
as a polynomial transform.

Polynomial transforms are known in the literature under
different names.  For example, in~\cite{Driscoll:97}
and~\cite{Potts:98}, the authors refer to $\PT_{b,\alpha}$ as a
discrete polynomial transform. In~\cite{Kailath:97}, the authors call
it a polynomial Vandermonde matrix. The most well-known example of a
polynomial transform is the discrete Fourier transform (DFT).

Polynomial transforms have a number of important applications.  For
example, they are used for interpolation and
approximation~\cite{Fuhrman:96}, solving differential
equations~\cite{Boyd:2001}, data compression and image
processing~\cite{Mallat:99,Mandyam:96,Martens:90,Martens:90a}, and the
DFT specifically is widely used for spectral analysis and fast
computation of correlation and convolution.

The origin and main motivation for our work lies in the
\emph{algebraic signal processing
theory}~\cite{Pueschel:05e,Pueschel:08, Pueschel:08a}.  This theory
identifies polynomial transforms as equivalent to (generalized)
Fourier transforms for shift-invariant 1-D signal models, and
establishes a connection between these transforms and the
representation theory of polynomial algebras. This connection has been
used to algebraically derive many known and new fast algorithms for
the DFT and discrete cosine and sine transforms (DCTs and DSTs)
\cite{Pueschel:08a,Pueschel:08b} extending early ideas by Nussbaumer
\cite{Nussbaumer:82}. All these algorithms are derived and represented
as factorizations of the transform matrix into a product of
structured matrices with low computational costs.

In this paper, we develop a new algebraic method for a polynomial
transform factorization. It is based on viewing the associated
polynomial algebra as a regular module and decomposing it into an
induction using a chosen subalgebra. This decomposition, performed in
steps, yields a factorization of the polynomial transform. If all
factors have sufficiently low computational costs, this factorization is a fast
algorithm.

Our method extends the approach in \cite{Pueschel:08a,Pueschel:08b} to
its most general form.  As an application, we derive novel fast
general-radix algorithms for the DFT and the DCT of type 4 that
require only $O(n\log{n})$ operations instead of $n^2.$

\subsection{Related Work}
Over the last decades, decompositions that lead to fast algorithms
have been studied for certain polynomial transforms. Among them, the
DFT is arguably the most famous and well-studied.  The discovery of
the Cooley-Tukey fast Fourier transform (FFT)
algorithm~\cite{Cooley:65}, which reduced the computational cost of
$\DFT_n$ to $O(n\log{n})$ operations, led to decades of research and
numerous FFTs (see~\cite{VanLoan:92,Tolimieri:97} and the
references therein).

Most other polynomial transforms of interest are related to the DFT
and form the class of trigonometric transforms, since their entries
are cosine and sine expressions. This class includes the DCT and the
discrete sine transform (DST) of various types, as well as the real
DFT and the discrete Hartley transform.  Fast algorithms with
$O(n\log{n})$ operations have been developed, for example
in~\cite{Bergland:68,Bracewell:84,Wang:84,Sorensen:85,Duhamel:86,Rao:90}.

A more general class of polynomial transforms that were studied are
those based on orthogonal
polynomials~\cite{Kailath:97,Driscoll:97,Potts:98}.  With the
exception of DCT and DST, which belong to this group of transforms,
the fast algorithms for this class reported in the literature require
$O(n\log^2{n})$ operation.

Among hundreds of publications on this topic, most derived fast
algorithms by clever, but often complicated manipulations of the
matrix coefficients. This method provides little insight into the
origin and the basic principles that account for the existence of
these algorithms.

Another thread of research that we refer to as an algebraic theory of
transform algorithms has uncovered these principles for a large class
of algorithms for trigonometric transforms
\cite{Pueschel:03a,Pueschel:08b,Voronenko:09}. The theory exploits the connection
between polynomial transforms and polynomial algebras and uses
algebraic techniques to derive algorithms. As a result, most existing
algorithms were identified as special cases of two basic theorems, the
derivation is greatly simplified, and new algorithms were found.

The origin of the algebraic approach is in
\cite{Nicholson:71,Nussbaumer:82,Auslander:84,Beth:84}, who recognized
that the $\DFT_n$ can be interpreted as a decomposition matrix for the
group algebra $\C[\Z_n],$ where $\Z_n$ is a cyclic group of order
$n$~\cite{Auslander:84, Beth:84}. Since $\C[\Z_n]$ is identical to
the polynomial algebra $\C[x]/(x^n-1)$, this decomposition is
\begin{equation}
\label{eq:DFT_decomp}
\C[\Z_n]\cong\C[x]/(x^n-1) \rightarrow
\C[x]/(x-\omega_n^0)  \oplus \dots \oplus \C[x]/(x-\omega_n^{n-1}).
\end{equation}
Algorithms are now derived by performing this decomposition in steps
and reading off the respective matrices, which in turn factorize the DFT.

The group point of view was then generalized to derive fast Fourier
transforms for group algebras $\C[G]$ for noncyclic finite groups
$G$~\cite{Beth:87, Clausen:88, Diaconis:90, Rockmore:94, Clausen:93}.
Some of them were based on the induction for group algebras, a
construction that is algebraically analogous to the method used in
this paper.

The polynomial algebra point of view was extended to derive and study
larger classes of
FFTs~\cite{Winograd:79,Johnson:85,Auslander:84a,Heideman:86,Nussbaumer:82}.

The extension to the algorithm derivation of the full class of
trigonometric transforms and a large class of algorithms was then
accomplished in \cite{Pueschel:08b,Voronenko:09} based on early ideas
from \cite{Steidl:91,Steidl:92}. Since all these algorithms are based
on two theorems that generalize and account for the original
Cooley-Tukey FFT, all the algorithms were called ``Cooley-Tukey
type.''  The close relation between transforms and algebra was fully
developed and explained in the algebraic signal processing
theory~\cite{Pueschel:08,Pueschel:08a}.

In this paper we generalize the main theorem from \cite{Pueschel:08b}
and hence the class of Cooley-Tuke type algorithms. Specifically,
following the discussion in~\cite{Pueschel:08b}, we rigorously
demonstrate in Chapter~\ref{sec:MainTheorem} that these algorithms can
be viewed as based on a special case of algebraic induction. Then we
generalize the construction method to its most general form and show that it
produces novel algorithms. As examples, we derive new general-radix algorithms
for the DFT and the DCT of type 4.

\section{Polynomial algebras and transforms}

In this section we discuss polynomial algebras and
demonstrate that their decomposition matrices are exactly
polynomial transforms. We assume that the reader
is familiar with the basic theory of algebras, modules, and
matrix representations, even though we strive for a self-contained
presentation in this paper. A good introduction to these topics
can be found in \cite{Dummit:03,Curtis:62,Fuhrman:96}
Below, we briefly review definitions and important properties.

A vector space that is also a ring is called an \emph{algebra}. In
this paper, we work with \emph{polynomial algebras} of the form
$\A=\C[x]/p(x)$. Elements of $\A$ are polynomials in $x$ that are
added and multiplied modulo $p(x)$.  We assume
$p(x)=\prod_{k=0}^{n-1}{(x-\alpha_k)}\in\C[x]$ is a polynomial of
degree $n$ and separable, i.e. $\alpha_k \neq \alpha_m$ for $k\neq m$.
$\A$ is a commutative algebra of dimension $n$ with a multiplicative identity.

A vector space $\M$ that permits a multiplication by elements of $\A,$ such that
\begin{equation*}
am\in\M \text{  for any  } a\in\A,\ m\in\M,
\end{equation*}
is called an \emph{$\A$-module}. The special case $\M=\A$ is called a
\emph{regular} module. A subvector space $\N\leq\M$ that is
also closed under the multiplication by elements of $\A$, is called
an \emph{$\A$-submodule} of $\M$. If $\M$ has only trivial submodules
(i.e., $\{0\}$ and itself), it is called \emph{irreducible}.

It follows from the Wedderburn theorem that a regular module $\M=\A$
can be decomposed into a direct sum of irreducible
$\A$-modules~\cite{Curtis:62,Dummit:03}.  This decomposition is
accomplished by the Chinese Remainder Theorem:
\begin{equation}
\label{eq:isomorphism}
\begin{array}{rrcl}
\Delta:&\M&\rightarrow&\bigoplus_{k=0}^{n-1}{\C[x]/(x-\alpha_k)}, \\
&s(x)&\mapsto&\begin{pmatrix} s(\alpha_0) & s(\alpha_1) & \dots & s(\alpha_{n-1})\end{pmatrix}^T.
\end{array}
\end{equation}

Suppose the basis of $\M$ is a list \ of polynomials $b=\mylist{p_0(x), \dots, p_{n-1}(x)}$,
and in each  $\C[x]/(x-\alpha_k)$ we choose the basis consisting of $1$.
Then the matrix that describes the isomorphism~\eqref{eq:isomorphism} is precisely
the polynomial transform shown in~\eqref{eq:PT_matrix} :
\begin{equation}
\label{eq:PT}
\PT_{b,\alpha} = \left[ p_\ell(\alpha_k) \right]_{0\leq k,\ell < n} .
\end{equation}
Namely, $s(x)=\sum_{\ell=0}^{n-1}{s_\ell p_\ell(x)}\in\M$ becomes, in
coordinate form, the column vector
\begin{equation*}
\coord{s(x)} = \begin{pmatrix}s_0 & s_1 & \dots & s_{n-1} \end{pmatrix}^T,
\end{equation*}
and $\Delta(s(x))$ in~\eqref{eq:isomorphism} can be computed as the matrix-vector product
\begin{equation}
\label{eq:FT}
\Delta(s(x)) = \PT_{b,\alpha} \cdot \coord{s(x)} .
\end{equation}

%
%

\begin{example}
\label{ex:DFT}
\end{example}
If $b=\mylist{1,x,\dots,x^{n-1}}$ is the standard basis, then
the polynomial transform~\eqref{eq:PT} is the Vandermonde matrix
\begin{equation}
\label{eq:Vandermonde}
\PT_{b,\alpha} = \left[ \alpha_k^\ell \right]_{0\leq k,\ell < n}.
\end{equation}
If, in addition, $p(x)=x^n-1$, then $\alpha_k=\omega_n^k$, where
$\omega_n=e^{-i\frac{2\pi}{n}}$ with $i=\sqrt{-1}$, and the polynomial
transform is precisely the discrete Fourier transform
\begin{equation}
\label{eq:DFT}
\DFT_n = \left[ \omega_n^{k\ell} \right]_{0\leq k,\ell < n}.
\end{equation}

\begin{example}
\label{ex:DCT3}
\end{example}
If $b=\mylist{ T_0(x),\dots, T_{n-1}(x)}$ is the basis consisting of the
Chebyshev polynomials of the first kind\footnotemark,
then the polynomial transform has the form
\begin{equation}
\label{eq:DCT3_example}
\PT_{b,\alpha} = \left[ T_\ell(\alpha_k) \right]_{0\leq k,\ell < n}.
\end{equation}
If, in addition, $p(x)=T_n(x)$, then
$\alpha_k=\cos{\frac{(2k+1)\pi}{2n}}$ (see Table~\ref{tab:chebyshev}),
and the polynomial transform is the discrete cosine transform of type
3~\cite{Rao:90}:
\begin{equation}
\label{eq:DCTIII}
\DCTIII_n = \left[ \cos{\frac{(2k+1)\ell\pi}{2n}} \right]_{0\leq k,\ell < n}.
\end{equation}

%
\footnotetext[2]{ %
Chebyshev polynomials $C_k$ are the polynomials that satisfy the
two-term recurrence $C_{k+1} = 2xC_k - C_{k-1}$~\cite{Mason:02}. Hence, the whole
sequence of polynomials is determined by $C_0$ and $C_1$.
By setting $x=\cos{\theta}$, Chebyshev polynomials can also be
expressed in their trigonometric closed form as functions of
$\theta$. These and other properties are
shown in Table~\ref{tab:chebyshev}.
}%

%
\begin{table*}
\centering \ra{1.3}
{\footnotesize
\begin{tabular}{@{}llllll@{}}
\toprule %
Kind & $C$ & $C_0,\ C_1$ & $C_n(\cos\theta)$  & Symmetry & Zeros ($0\leq k < n$)\\
\midrule %
$1^\text{st}$ & $T$ & $1,x$ & $\cos{(n\theta)}$ & $T_{-n}=T_n$ & $\cos{\tfrac{(2k+1)\pi}{2n}}$ \\
$2^\text{nd}$ & $U$ & $1,2x$ & $\frac{\sin{(n+1)\theta}}{\sin{\theta}}$ & $U_{-n}=-U_{n-2}$ & $\cos{\tfrac{(k+1)\pi}{n+1}}$ \\
$3^\text{rd}$ & $V$ & $1,2x-1$ & $\frac{\cos{(n+\frac{1}{2})\theta}}{\cos{\frac{\theta}{2}}}$ & $V_{-n}=V_{n-1} $ & $\cos{\tfrac{(2k+1)\pi}{2n+1}}$ \\
$4^\text{th}$ & $W$ & $1,2x+1$ & $\frac{\sin{(n+\frac{1}{2})\theta}}{\sin{\frac{\theta}{2}}}$ & $W_{-n}=-W_{n-1}$ & $\cos{\tfrac{(2k+2)\pi}{2n+1}}$ \\
\bottomrule
\end{tabular}
}
\caption{Chebyshev polynomials, their closed form $C_n(\cos\theta)$, symmetry, and zeros.}
\label{tab:chebyshev}
\end{table*}

\mypar{Scaled polynomial transforms}
The notion of a polynomial transform can be generalized
by allowing a different choice of a basis in the $\C[x]/(x-\alpha_k)$ in~\eqref{eq:isomorphism}.
Namely, if we choose the basis $\mylist{c_k}, c_k\in\C$ in each $\C[x]/(x-\alpha_k)$,
then~\eqref{eq:PT} becomes the \emph{scaled polynomial transform}
\begin{equation}
\label{eq:scaled_PT}
\pol{\PT}_{b,\alpha} =
\diag \left( \frac{1}{c_0}, \dots, \frac{1}{c_{n-1}} \right)
\cdot \PT_{b,\alpha},
\end{equation}
with $\PT_{b,\alpha}$ as defined in~\eqref{eq:PT}.

\begin{example}
\label{ex:DCT4}
\end{example}
Let $p(x)=T_n(x)$, and choose the basis $b=\mylist{V_0(x),\dots, V_{n-1}(x)}$ in $\M$,
where $V_\ell(x)$ is the $\ell$-th Chebyshev polynomial of the third kind.
If we choose $c_k=1/\cos{\frac{(k+1/2)\pi}{2n}}$, then the associated scaled polynomial transform is
the discrete cosine transform of type 4:
\begin{eqnarray*}
\PT_{b,\alpha} &=&
\diag_{0\leq k<n}\left( \cos{\frac{(k+1/2)\pi}{2n}} \right) \cdot
\left[ \frac{\cos{\frac{(k+1/2)(\ell+1/2)\pi}{n}}}{\cos{\frac{(k+1/2)\pi}{2n}}} \right]_{0\leq k,\ell <n} \\
&=& \left[ \cos{\frac{(k+1/2)(\ell+1/2)\pi}{n}} \right]_{0\leq k,\ell <n} \\
&=& \DCTIV_n.
\end{eqnarray*}
Note that all 16 types of discrete sine and cosine transforms are
scaled or unscaled polynomial transforms with bases consisting of
Chebyshev polynomials~\cite{Pueschel:08a}.


\section{Subalgebra and its structure}
\label{sec:Subalgebra}
In this section we discuss the structure of subalgebras of $\A=\C[x]/p(x)$.


\subsection{Definition}
Choose a polynomial $r(x)\in\A$, and consider the space of
polynomials in $r(x)$ with addition and multiplication performed
modulo $p(x)$:
\begin{equation}
\label{eq:subalgebra}
\B=\Big\{ \sum_{k\geq 0}{c_k r^k(x)} \mod p(x) \mid c_k\in\C \Big\},
\end{equation}
where all sums are finite.
We call $\B$ the \emph{subalgebra} of $\A$ \emph{generated} by $r(x)$
and write this as $\B = \gen{r(x)}\leq\A$.

\subsection{Structure}

Given $r(x)\in\A$, we first determine the dimension of $\B = \gen{r(x)}$.
Then we identify $\B$ with a polynomial algebra of the form $\C[y]/q(y)$ with a suitably
chosen polynomial $q(y)$.

Let $\alpha=\mylist{\alpha_0,\dots, \alpha_{n-1}}$ be the list of roots of $p(x)$.
The generator $r(x)$ maps $\alpha$ to the list
$\beta=\mylist{\beta_0,\dots,\beta_{m-1}},$ such that for each $\alpha_k\in\alpha$
there is a $\beta_j\in\beta,$ for which $r(\alpha_k)=\beta_j.$
Hence, $m\leq n$, since for some $k$ and $\ell$
we may have $r(\alpha_k)=r(\alpha_\ell)$.

\begin{theorem}
\label{thm:dimension_of_B}
The dimension of $\B=\gen{r(x)}$ is $\dim{\B}=m=|\beta|$.
\end{theorem}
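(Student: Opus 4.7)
The plan is to identify $\mathcal{B}$ with a subspace of $\bigoplus_k \mathbb{C}[x]/(x-\alpha_k) \cong \mathbb{C}^n$ via the Chinese Remainder isomorphism $\Delta$ from~\eqref{eq:isomorphism}, and then determine its dimension by producing both an explicit spanning set of size $m$ and a Vandermonde linear independence argument.

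First I would note that $\Delta$ is an algebra isomorphism, so $\Delta$ restricted to $\mathcal{B}$ is injective and $\dim \mathcal{B} = \dim \Delta(\mathcal{B})$. Under $\Delta$ the generator $r(x)$ maps to the vector $v = \bigl(r(\alpha_0), \dots, r(\alpha_{n-1})\bigr)^T$, and since $\Delta$ is a ring homomorphism, $r(x)^k \mapsto v^{\odot k} := \bigl(r(\alpha_0)^k, \dots, r(\alpha_{n-1})^k\bigr)^T$ (componentwise power). Thus $\Delta(\mathcal{B})$ is the subalgebra of $\mathbb{C}^n$ generated by $v$, i.e. the span of $\{v^{\odot 0}, v^{\odot 1}, v^{\odot 2}, \dots\}$.

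Next, consider the polynomial $q(y) = \prod_{j=0}^{m-1}(y - \beta_j) \in \mathbb{C}[y]$ whose roots are exactly the distinct values in $\beta$. For every $k$, $q(r(\alpha_k)) = 0$ because $r(\alpha_k) \in \beta$. Hence $\Delta(q(r(x))) = 0$, so $q(r(x)) \equiv 0$ in $\mathcal{A}$. This lets me rewrite $r(x)^m$ as a $\mathbb{C}$-linear combination of $1, r(x), \dots, r(x)^{m-1}$, and by induction the same holds for every higher power. Therefore $\mathcal{B} = \mathrm{span}\{1, r(x), \dots, r(x)^{m-1}\}$, giving $\dim \mathcal{B} \leq m$.

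For the reverse inequality I would check that $1, r(x), \dots, r(x)^{m-1}$ are linearly independent in $\mathcal{A}$. Suppose $\sum_{k=0}^{m-1} c_k r(x)^k = 0$ in $\mathcal{A}$. Applying $\Delta$, for each of the $m$ distinct values $\beta_j$ pick any $\alpha_k$ with $r(\alpha_k)=\beta_j$; then $\sum_{k=0}^{m-1} c_k \beta_j^k = 0$ for all $0 \leq j < m$. This is an $m \times m$ Vandermonde system in the unknowns $c_k$, and since the $\beta_j$ are pairwise distinct the Vandermonde matrix is invertible, forcing $c_k = 0$. Combining the two bounds yields $\dim \mathcal{B} = m$. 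The only mildly subtle step is the passage from $q(r(x)) = 0$ in $\mathcal{A}$ to a spanning set of size $m$, but this is immediate since $q$ is monic of degree $m$.
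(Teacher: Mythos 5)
Your proof is correct and uses essentially the same strategy as the paper: push everything through the CRT isomorphism $\Delta$ from~\eqref{eq:isomorphism} and then use a Vandermonde argument on the distinct values $\beta_j$. The one small structural difference is in the upper bound: you establish $\dim\B\leq m$ by observing that $q(r(x))\equiv 0$ (so powers of $r(x)$ beyond degree $m-1$ are redundant), whereas the paper gets the same bound more directly by noting that the matrix $\bigl[r^\ell(\alpha_k)\bigr]_{0\leq k,\ell<n}$ has at most $m$ distinct rows, hence rank at most $m$. Your version effectively proves, as a byproduct, the key fact $q(r(x))\equiv 0$ that the paper defers to the proof of Theorem~\ref{thm:algebraic_structure_of_B}; the paper's phrasing keeps the two results logically independent. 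Both are fine, and the lower bound is identical in substance in the two arguments.
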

\begin{proof}
Let $d=\dim{\B}$. Since $\B\leq\A,$ then $\dim\B\leq\dim\A$ and the polynomials
$\mylist{1,r(x),\dots,r^{n-1}(x)}$ span the entire $\B$.
From the isomorphism~\eqref{eq:isomorphism} we obtain
\begin{eqnarray*}
d & = & \rank
\begin{pmatrix}
\Delta(1), \Delta(r(x)),\dots,\Delta(r^{n-1}(x))
\end{pmatrix} \\
& = & \rank
\begin{bmatrix}
r^\ell(\alpha_k)
\end{bmatrix}_{0\leq k,\ell < n}.
\end{eqnarray*}
Since $r(\alpha_k)\in\beta$ and $|\beta|=m$, the above matrix has only $m$ different rows;
hence, $d \leq m$. On the other hand, it contains the full-rank $m\times m$ Vandermonde matrix
\begin{equation*}
\begin{bmatrix}
\beta_j^\ell
\end{bmatrix}_{0\leq j,\ell < m}
\end{equation*}
as a submatrix; hence, $d\geq m$. Thus, we conclude that $d=\dim{\B}=m$.
\end{proof}
Next, we identify $\B$ with a polynomial algebra.
\begin{theorem}
\label{thm:algebraic_structure_of_B}
The subalgebra $\B=\gen{r(x)}$ can be identified with
the polynomial algebra $\C[y]/q(y)$, where
$q(y)=\prod_{j=0}^{m-1}{(y-\beta_j)}$,
via the following canonical isomorphism of algebras:
\begin{equation}
\label{eq:B_isomorphism}
\begin{array}{rrcl}
\kappa:&\B&\rightarrow&\C[y]/q(y),\\
&r(x)&\mapsto&y.
\end{array}
\end{equation}
We indicate this canonical isomorphism as $\B\cong\C[y]/q(y).$
\end{theorem}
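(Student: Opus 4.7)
The plan is to exhibit the map $\kappa$ as the descent to the quotient of the evaluation homomorphism $\C[y] \to \B$, $y \mapsto r(x)$, and then identify its kernel with the principal ideal $(q(y))$.

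First, I would define the $\C$-algebra homomorphism
\begin{equation*}
\varphi:\C[y]\rightarrow\B,\qquad f(y)\mapsto f(r(x))\bmod p(x).
\end{equation*}
This is well defined and is a ring homomorphism because substitution into a fixed element $r(x)$ of the commutative algebra $\A$ respects addition and multiplication, and the result lies in $\B$ by the very definition~\eqref{eq:subalgebra}. Surjectivity of $\varphi$ is immediate from~\eqref{eq:subalgebra}, since every element of $\B$ is, by definition, a $\C$-linear combination of powers of $r(x)$.

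The heart of the argument is computing $\ker\varphi$. A polynomial $f(y)\in\C[y]$ lies in $\ker\varphi$ iff $f(r(x))\equiv 0\pmod{p(x)}$. Because $p(x)=\prod_{k=0}^{n-1}(x-\alpha_k)$ is separable, this congruence is equivalent, via the isomorphism~\eqref{eq:isomorphism}, to the vanishing of $f(r(x))$ at every root of $p(x)$, i.e.,
\begin{equation*}
f(r(\alpha_k))=0\quad\text{for all } 0\leq k < n.
\end{equation*}
Since by construction $\{r(\alpha_0),\dots,r(\alpha_{n-1})\}=\beta=\{\beta_0,\dots,\beta_{m-1}\}$ as sets, this condition simplifies to $f(\beta_j)=0$ for all $0\leq j<m$. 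Since the $\beta_j$ are distinct, this holds iff $q(y)=\prod_{j=0}^{m-1}(y-\beta_j)$ divides $f(y)$ in $\C[y]$. Hence $\ker\varphi=(q(y))$.

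The first isomorphism theorem then yields an induced algebra isomorphism $\kappa:\C[y]/q(y)\stackrel{\sim}{\to}\B$, which sends the class of $y$ to $r(x)$; inverting gives the stated map in~\eqref{eq:B_isomorphism}. As a sanity check, both sides have $\C$-dimension $m$: the right-hand side by construction of $q(y)$, and the left-hand side by Theorem~\ref{thm:dimension_of_B}. The only genuinely nontrivial step is the separability argument used to convert congruence modulo $p(x)$ into pointwise vanishing on the roots; everything else is a routine application of the first isomorphism theorem.
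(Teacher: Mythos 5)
Your proof is correct and follows essentially the same route as the paper: both rely on the CRT isomorphism~\eqref{eq:isomorphism} to conclude that evaluating polynomials in $r(x)$ at the roots $\alpha_k$ reduces to evaluation at the $\beta_j$, yielding $q(r(x))\equiv 0\pmod{p(x)}$. The only difference is packaging: the paper verifies just that one containment and then closes the argument by citing the dimension count from Theorem~\ref{thm:dimension_of_B}, whereas you compute $\ker\varphi=(q(y))$ exactly and invoke the first isomorphism theorem, which re-derives that dimension fact rather than reusing it.
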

\begin{proof}
Observe that $\B$ and $\C[y]/q(y)$ have the same dimension $m$,
and $\kappa$ maps the generator $r(x)$ of $\B$ to the
generator $y$ of $\C[y]/q(y)$. Hence, it suffices to show
that $q(r(x))\equiv 0 \mod p(x)$ in $\B$. From~\eqref{eq:isomorphism}
we obtain
\begin{eqnarray*}
\Delta(q(r(x))) & = &
\begin{pmatrix}
q(r(\alpha_0)) & \dots & q(r(\alpha_{n-1})
\end{pmatrix}^T \\
& = &
\begin{pmatrix}
0 & \dots & 0
\end{pmatrix}^T,
\end{eqnarray*}
which implies that $q(r(x))\equiv 0 \mod p(x)$ in $\A$, and hence in $\B$.
\end{proof}

Let $c=\mylist{q_0(y),\dots,q_{m-1}(y)}$ be a basis of $\C[y]/q(y)$.
The polynomial transform~\eqref{eq:PT} that decomposes the regular
module $\C[y]/q(y)$ (and hence the regular $\B$-module $\B$) is given
by \eqref{eq:isomorphism} as
\begin{equation*}
\PT_{c,\beta} = \left[ q_\ell(\beta_j) \right]_{0\leq j,\ell < m} .
\end{equation*}

\begin{example}
\label{ex:Subalgebra}
\end{example}
Consider the polynomial algebra $\A=\C[x]/(x^4-1)$ with
$\alpha=\mylist{1,-i,-1,i}$.  The polynomial $r_1(x)=x^2$ generates
the subalgebra $\B_1=\gen{r_1(x)}\cong\C[y]/(y^2-1)$ of dimension 2,
since $r_1(x)$ maps $\alpha$ to $\beta=\mylist{1,-1}$.

The polynomial $r_2(x)=(x+x^{-1})/2 = (x+x^3)/2$ generates the subalgebra
$\B_2=\gen{r_2(x)}\cong\C[y]/(y^3-y)$ of dimension 3, since $r_2(x)$
maps $\alpha$ to $\beta=\mylist{1,0,-1}$.


\section{Module induction}
\label{sec:Induction}
In this section we introduce the concept of \emph{module induction},
which constructs an $\A$-module $\M$ from a $\B$-module $\N$ for a
subalgebra $\B\leq \A$.  We will show that every regular $\A$-module
is an induction, which is the basis of our technique
for polynomial transform decomposition.

\subsection{Induction}
Similar to the coset decomposition in group
theory~\cite{Dummit:03,Curtis:62}, we can decompose a polynomial
algebra $\A=\C[x]/p(x)$ using a subalgebra $\B$ and associated \emph{transversal}:
\begin{definition}[Transversal]
Let $\B\leq\A$ be a subalgebra of $\A$.
A \emph{transversal} of $\B$ in $\A$ is a list of polynomials
$T=\mylist{t_0(x),\dots,t_{L-1}(x)}\subset\A$, such that, as vector spaces,
\begin{equation}
\label{eq:transversal}
\A=\bigoplus_{\ell=0}^{L-1}{t_\ell(x)\B}
=t_0(x)\B\oplus\dots\oplus t_{L-1}(x)\B.
\end{equation}
\end{definition}

Later, in Theorem~\ref{thm:existence_transversal}, we establish necessary and
sufficient conditions for a list of polynomials to be a transversal of $\B$
in $\A$. In particular, for any $\B\leq\A$ there always exists a transversal.

Given a transversal of $\B$ in $\A$, we define the module induction,
which is analogous to the induction for group algebras
in~\cite{Curtis:62}.
\begin{definition}[Induction]
Let $\B\leq\A$ be a subalgebra of $\A$ with a transversal $T$ as
in~\eqref{eq:transversal}, and let $\N$ be a $\B$-module.  Then the
following construction is an $\A$-module:
\begin{equation}
\label{eq:module_induction}
\M = \bigoplus_{\ell=0}^{L-1}{t_\ell(x)\N},
\end{equation}
where the direct sum is again of vector spaces.  It is called the
\emph{induction} of the $\B$-module $\N$ with the transversal $T$ to
an $\A$-module.  We write this as $\M=\N \uparrow_T \A$.
\end{definition}

In this paper, we are primarily interested in regular modules.  These
are always inductions, as follows directly from~\eqref{eq:transversal}
and~\eqref{eq:module_induction}:
\begin{lemma}
Let $\B\leq\A$ with a transversal $T$. Then the regular module $\A$ is
an induction of the regular module $\B$:
\begin{equation}
\label{eq:regular_induction}
\A = \B \uparrow_T \A.
\end{equation}
\end{lemma}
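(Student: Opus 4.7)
The plan is to observe that this lemma is almost tautological once one unwinds the two definitions side by side, so the proof proposal is essentially a verification that the vector space decomposition given by a transversal coincides with the vector space underlying the induction, and that the $\A$-module structures agree.

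First, I would apply the definition of transversal directly: equation~\eqref{eq:transversal} says that, as vector spaces,
\[
\A = \bigoplus_{\ell=0}^{L-1} t_\ell(x)\B.
\]
Next, I would apply the definition of induction~\eqref{eq:module_induction} with $\N = \B$, the regular $\B$-module. By construction,
\[
\B\uparrow_T \A = \bigoplus_{\ell=0}^{L-1} t_\ell(x)\B,
\]
which is literally the same vector space. Thus the two sides of~\eqref{eq:regular_induction} agree as vector spaces.

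The only real point to check is that the $\A$-module structures coincide. On the right-hand side, the induction carries the $\A$-module structure guaranteed by the preceding definition, which is inherited from multiplication in $\A$ applied to the summands $t_\ell(x)\B$ (each $t_\ell(x)\B \subseteq \A$ is closed under multiplication by $\B$, and multiplication by a transversal element moves one summand into a sum of others, consistent with the direct-sum decomposition). On the left-hand side, $\A$ carries its natural regular module structure, which is also multiplication in $\A$. Hence the two module structures are identical, and \eqref{eq:regular_induction} holds as an equality of $\A$-modules.

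The main ``obstacle'' is purely bookkeeping: making sure that the $\A$-module structure placed on the induction in the definition is exactly multiplication in $\A$ restricted to the transversal decomposition, rather than some abstract external structure; once this is observed, no further calculation is needed.
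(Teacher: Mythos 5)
Your proof is correct and takes the same route the paper does: the paper offers no argument beyond noting that the claim ``follows directly from~\eqref{eq:transversal} and~\eqref{eq:module_induction},'' and your write-up simply spells out that observation by matching the two definitions and checking the module structures coincide.
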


\subsection{Structure of cosets}

We have established in~\eqref{eq:B_isomorphism} that the subalgebra
$\B\leq \A$, generated by $r(x)\in\A$, can be identified with a
polynomial algebra $\C[y]/q(y)$.  Next, we investigate the structure
of each $\B$-module $t_\ell(x)\B$ in the
induction~\eqref{eq:regular_induction}.

Consider a polynomial $t(x)\in\A$.
As in Theorem~\ref{thm:algebraic_structure_of_B}, let $r(x)$ map $\alpha$ to $\beta$,
and let $q(y)=\prod_{j=0}^{m-1}{(y-\beta_j)}$.
Further, let $\alpha'=\mylist{ \alpha_k\mid t(\alpha_k)\neq 0 }\subseteq \alpha$
be the sublist of $\alpha$ that consists of those $\alpha_k$ that are not roots of $t(x)$.
Finally, let $r(x)$ map $\alpha'$ to $\beta'\subseteq \beta$, and denote $|\beta'|=m'$.

\begin{theorem}
\label{thm:dimension_of_tB}
The dimension of $t(x)\B$ is $\dim{t(x)\B}=|\beta'|=m'$.
\end{theorem}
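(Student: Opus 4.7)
The plan is to mimic the proof of Theorem~\ref{thm:dimension_of_B}: pass to coordinates via the isomorphism $\Delta$ from~\eqref{eq:isomorphism}, express the dimension of $t(x)\B$ as the rank of an explicit matrix, and then reduce that matrix to a Vandermonde matrix indexed by $\beta'$.

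First I would note that, since $\B$ is spanned as a vector space by $\{1,r(x),\dots,r^{m-1}(x)\}$ (as used already in Theorem~\ref{thm:dimension_of_B}), the coset $t(x)\B$ is spanned by $\{t(x)r^\ell(x)\mid 0\le\ell<m\}$. Applying $\Delta$, which is an isomorphism of vector spaces, gives
\begin{equation*}
\dim t(x)\B \;=\; \rank\bigl[\, t(\alpha_k)\,r^\ell(\alpha_k)\,\bigr]_{0\le k<n,\; 0\le\ell<m}.
\end{equation*}

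Next I would analyze this matrix row by row. Rows indexed by $\alpha_k\notin\alpha'$ (i.e.\ with $t(\alpha_k)=0$) are identically zero and contribute nothing. For $\alpha_k\in\alpha'$, the row equals $t(\alpha_k)\cdot\bigl(1,r(\alpha_k),r^2(\alpha_k),\dots,r^{m-1}(\alpha_k)\bigr)$; in particular it is a nonzero scalar multiple of the row $\bigl(1,\beta_{j(k)},\beta_{j(k)}^2,\dots,\beta_{j(k)}^{m-1}\bigr)$, where $\beta_{j(k)}=r(\alpha_k)\in\beta'$. Two indices $k_1,k_2$ with $r(\alpha_{k_1})=r(\alpha_{k_2})$ therefore yield proportional rows, so the row span depends only on the set of distinct values $\beta_j\in\beta'$.

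Collapsing the proportional rows leaves exactly $m'$ independent rows of the form $(1,\beta_j,\beta_j^2,\dots,\beta_j^{m-1})$ for $\beta_j\in\beta'$. This is an $m'\times m$ matrix whose first $m'$ columns form the Vandermonde matrix $[\beta_j^\ell]_{0\le j,\ell<m'}$; since the $\beta_j\in\beta'\subseteq\beta$ are distinct and $m'\le m$, this submatrix is nonsingular, giving rank exactly $m'$. Combining the upper bound (only $m'$ distinct nonzero row directions) with this lower bound proves $\dim t(x)\B=m'$.

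The only mildly delicate point will be the bookkeeping between the index set $\alpha'\subseteq\alpha$ and its image $\beta'\subseteq\beta$ under $r$: one must check that the proportionality relations among the nonzero rows exactly partition them according to the fiber $r^{-1}(\beta_j)\cap\alpha'$, so that the effective number of independent rows is $|\beta'|=m'$ rather than $|\alpha'|$. Once that identification is in place, the Vandermonde step is routine and identical in spirit to the one used in Theorem~\ref{thm:dimension_of_B}.
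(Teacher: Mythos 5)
Your proof is correct and follows essentially the same approach as the paper: apply $\Delta$ to a spanning set for $t(x)\B$, reduce to the rank of the matrix $[t(\alpha_k)r^\ell(\alpha_k)]$, observe that rows with $t(\alpha_k)=0$ vanish and the surviving rows are nonzero scalar multiples of Vandermonde rows indexed by $\beta'$, and extract a nonsingular $m'\times m'$ Vandermonde submatrix. The only cosmetic difference is that you truncate to $m$ columns (since $\{1,r,\dots,r^{m-1}\}$ already spans $\B$) whereas the paper keeps $n$ columns and factors the matrix as $\diag(t(\alpha_k))\cdot[r^\ell(\alpha_k)]$; both lead to the same row analysis and conclusion.
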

\begin{proof}
The proof is similar to that of Theorem~\ref{thm:dimension_of_B}.
The list of polynomials $\mylist{t(x),t(x)r(x),\dots,t(x)r^{n-1}(x)}$
generates $t(x)\B$ as a vector space.
Using the isomorphism $\Delta$ in~\eqref{eq:isomorphism} we obtain
\begin{eqnarray}
\nonumber
\dim{\Big(t(x)\B\Big)} & = & \rank
\begin{pmatrix}
\Delta(t(x)), \Delta(t(x)r(x)),\dots,\Delta(t(x)r^{n-1}(x))
\end{pmatrix} \\
\label{eq:matrix_tDelta}
& = & \rank
\begin{bmatrix}
t(\alpha_k)r^\ell(\alpha_k)
\end{bmatrix}_{0\leq k,\ell < n} \\
\nonumber
& = & \rank \Big(
\diag\Big( t(\alpha_k) \Big)_{0\leq k < n}
\cdot
\begin{bmatrix}
r^\ell(\alpha_k)
\end{bmatrix}_{0\leq k,\ell < n} \Big).
\end{eqnarray}

Theorem~\ref{thm:algebraic_structure_of_B} shows that $\begin{bmatrix}
r^\ell(\alpha_k)\end{bmatrix}_{0\leq k,\ell < n}$ has exactly
$m=|\beta|$ linearly independent rows of the form
\begin{equation*}
\begin{pmatrix}
1 & \beta_j & \beta_j^2 & \dots & \beta_j^{n-1}
\end{pmatrix}.
\end{equation*}

For each $\beta_j$, the above row contributes exactly 1 to the rank of the matrix~\eqref{eq:matrix_tDelta}
if and only if there exists $\alpha_k$ such that $t(\alpha_k)\neq 0$ and $r(\alpha_k)=\beta_j$.
Since there are exactly $|\beta'|=m'$ such values of $\beta_j$, we conclude that
$\dim{\big(t(x)\B\big)}=m'$.
\end{proof}

Next, we identify the $\B$-module $t(x)\B$ with a $\C[y]/q(y)$-module.

\begin{theorem}
\label{thm:algebraic_structure_of_tB}
The $\B$-module $t(x)\B$ can be identified with the $\C[y]/q(y)$-module $\C[y]/q'(y)$, where
$q'(y)=\prod_{\beta_j\in\beta'}{(y-\beta_j)}$, via the module isomorphism
\begin{equation}
\label{eq:tB_isomorphism}
\begin{array}{rrcl}
\eta:& t(x)\B & \rightarrow & \C[y]/q'(y) , \\
&t(x)r^k(x) & \mapsto & y^k .
\end{array}
\end{equation}
By a slight abuse of notation, we write $t(x)\B\cong\C[y]/q'(y)$.
This is an isomorphism of modules and should not be confused with the
isomorphism of algebras in Theorem~\ref{thm:algebraic_structure_of_B}.
\end{theorem}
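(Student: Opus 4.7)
The plan is to check three things for the map $\eta$: that it is well-defined on $t(x)\B$ (i.e., independent of the polynomial representative of an element), that it is a bijection of vector spaces, and that it respects the module action (via the algebra isomorphism $\kappa$ from Theorem~\ref{thm:algebraic_structure_of_B}). Since Theorem~\ref{thm:dimension_of_tB} already gives $\dim(t(x)\B)=m'=\dim\bigl(\C[y]/q'(y)\bigr)$, well-definedness plus injectivity will automatically yield bijectivity, so the whole argument reduces to one careful computation plus a routine module-action check. I expect the well-definedness step to be the only real content; the rest is formal.

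The central observation is that $q'(y)\mid q(y)$ (because $\beta'\subseteq\beta$), so $\C[y]/q'(y)$ inherits a $\C[y]/q(y)$-module structure via the canonical quotient map, and hence a $\B$-module structure through $\kappa$. For well-definedness I would argue as follows. An arbitrary element of $t(x)\B$ has the form $t(x)f(r(x))$ for some $f(y)\in\C[y]$, and two such expressions $t(x)f(r(x))$ and $t(x)g(r(x))$ coincide in $\A$ iff $p(x)\mid t(x)\bigl(f(r(x))-g(r(x))\bigr)$, iff
\begin{equation*}
t(\alpha_k)\bigl(f(r(\alpha_k))-g(r(\alpha_k))\bigr)=0 \quad \text{for all } 0\le k<n.
\end{equation*}
For each $\alpha_k$ either $t(\alpha_k)=0$, or $\alpha_k\in\alpha'$ and then $r(\alpha_k)\in\beta'$, in which case $(f-g)(r(\alpha_k))=0$. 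Hence $(f-g)(\beta_j)=0$ for every $\beta_j\in\beta'$, and since $q'(y)$ is squarefree with exactly those roots, this is equivalent to $f\equiv g \pmod{q'(y)}$. Thus $\eta$ is well-defined, and the same chain of equivalences, read from right to left, shows it is injective.

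Finally, since the generators $t(x)r^k(x)$ of $t(x)\B$ are mapped to the generators $y^k$ of $\C[y]/q'(y)$, surjectivity is immediate (or equivalently follows from equality of dimensions). For the module action, it is enough to check compatibility on the generator $r(x)$ of $\B$, and indeed
\begin{equation*}
\eta\bigl(r(x)\cdot t(x)r^k(x)\bigr)=\eta\bigl(t(x)r^{k+1}(x)\bigr)=y^{k+1}=\kappa(r(x))\cdot\eta\bigl(t(x)r^k(x)\bigr),
\end{equation*}
which extends by $\C$-linearity to the whole of $\B$. The main (and only) obstacle is the root-counting step in the well-definedness argument; everything else is bookkeeping, parallel in spirit to the proofs of Theorems~\ref{thm:dimension_of_B} and~\ref{thm:algebraic_structure_of_B}.
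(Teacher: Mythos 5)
Your proof is correct and follows the same high-level plan as the paper's --- establish bijectivity of $\eta$ using the dimension count from Theorem~\ref{thm:dimension_of_tB}, then check compatibility with the $\B$-action via $\kappa$ --- but you handle the well-definedness of $\eta$ more carefully. The paper simply observes that $\mylist{t(x),\dots,t(x)r^{m'-1}(x)}$ and $\mylist{1,\dots,y^{m'-1}}$ are bases of the two sides and declares $\eta$ to be a bijective linear map defined on the first basis; it leaves implicit the fact that the formula $t(x)r^k(x)\mapsto y^k$ remains coherent for \emph{all} $k\geq 0$, which is silently used later when expanding $\eta(h(x)v(x))$ with powers $r^j(x)$ up to degree $j=m+m'-2$. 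Your root-counting argument --- $t(x)f(r(x))=t(x)g(r(x))$ in $\A$ iff $f-g$ vanishes on $\beta'$ iff $q'(y)\mid (f-g)$ --- establishes exactly this coherence and simultaneously gives injectivity, so it plugs a small gap the paper glosses over. Your check of the module action on the single generator $r(x)$ with extension by $\C$-linearity is equivalent to the paper's explicit convolution-sum expansion of $\eta(h(x)v(x))=\kappa(h(x))\cdot\eta(v(x))$; both are fine, and yours is a bit leaner.
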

\begin{proof}
It follows from Theorem~\ref{thm:dimension_of_tB} that
$\mylist{t(x),t(x)r(x),\dots,t(x)r^{m'-1}(x)}$ is a basis of $t(x)\B$,
viewed as a vector space. On the other hand,
$\mylist{1,y,\dots,y^{m'-1}}$ is obviously a basis of $\C[y]/q'(y),$
also viewed as a vector space.  Hence, $\eta$
in~\eqref{eq:tB_isomorphism} is a bijective linear mapping between
$t(x)\B$ and $\C[y]/q'(y)$.

In order for $\eta$ to be an isomorphism of modules, it must also be a
module homomorphism---it must preserve the addition and multiplication
in $t(x)\B$ and $\C[y]/q'(y)$.  Namely, for $h(x)\in\B$ and
$u(x),v(x)\in t(x)\B$, the following conditions must hold:
\begin{eqnarray*}
\eta\big( u(x)+v(x) \big) &=& \eta(u(x)) + \eta(v(x)),\\
\eta\big( h(x)v(x) \big) &=& \kappa(h(x)) \cdot \eta(v(x)).
\end{eqnarray*}
The first condition is trivial. To show that the second condition
holds, let $h(x)=\sum_{k=0}^{m-1}{h_k r^k(x)}\in\B$ and
$v(x)=\sum_{j=0}^{m'-1}{v_j t(x)r^j(x)}\in t(x)\B$.  Then
\begin{eqnarray*}
\eta\big( h(x)v(x) \big) & = & \eta\big( \sum_{j=0}^{m+m'-2}{ \sum_{k=0}^j{h_k v_{j-k}} t(x)r^j(x)} \big)
= \sum_{j=0}^{m+m'-2}{ \sum_{k=0}^j{h_k v_{j-k}} y^j} \\
&=& \sum_{k=0}^{m-1}{h_k y^k} \cdot \sum_{j=0}^{m'-1}{v_j y^j}
= \kappa(h(x)) \cdot \eta(v(x)).
\end{eqnarray*}
Hence, $\eta$ is a module isomorphism.
\end{proof}

Note that, depending on $t(x)$, the dimension of $t(x)\B$ may be smaller than
the dimension of $\B$: $m' \leq m$. This effect is called \emph{annihilation}.

Also, the definition of $\eta$ in~\eqref{eq:tB_isomorphism} assumes the
standard basis $\mylist{1,y,\dots,y^{m'-1}}$ in $\C[y]/q'(y).$
If another basis $\mylist{b_0(y),\dots,b_{m'-1}(y)}$ were desired,
the corresponding basis in $t(x)\B$ would be
$\mylist{t(x)b_0(r(x)),\dots,t(x)b_{m'-1}(r(x))}.$

As a consequence of Theorem~\ref{thm:algebraic_structure_of_tB} and
the above discussion, decomposing the $\B$-module $t(x)\B$ with basis
$\mylist{t(x)q_0(r(x)),\dots,t(x)q_{m'-1}(r(x))}$ is the same as
decomposing the $\C[y]/q(y)$-module $\C[y]/q'(y)$ with basis
$c=\mylist{q_0(y),\dots,q_{m'-1}(y)}$. The decomposition matrix
is the same as for the regular module $\C[y]/q'(y)$ with the same basis,
namely
\begin{equation}
\label{eq:PT_tB}
\PT_{c,\beta'} = \left[ q_\ell(\beta_j) \right]_{0\leq j,\ell < m'} .
\end{equation}

\subsection{Existence of a transversal}

Consider $T=\mylist{ t_0(x),\dots,t_{L-1}(x)} \subset\A,$
and let $\dim{\big(t_\ell(x)\B\big)}=m_\ell$ for $0\leq \ell < L.$
Then $\mylist{ t_\ell(x),t_\ell(x)r(x),\dots,t_\ell(x)r^{m_\ell-1}(x)}$ is a basis
of $t_\ell(x)\B,$ as follows from Theorem~\ref{thm:dimension_of_tB}.
Hence, $T$ satisfies \eqref{eq:transversal}
if and only if $m_0+\dots+m_{L-1}=n$ and the concatenation of bases
\begin{equation}
\label{eq:bases_concatenation}
b' = \bigcup_{\ell=0}^{L-1}{ \myList{t_\ell(x),\dots,t_\ell(x)r^{m_\ell-1}(x) } }
\end{equation}
is a basis in $\A$. The following theorem states this condition in a matrix form.

\begin{theorem}
\label{thm:existence_transversal}
Using previous notation, $T$ is a transversal if and only if the following is a full-rank $n\times n$ matrix:
\begin{equation}
\label{eq:transversal_condition_matrix}
M' =
\begin{pmatrix}
D_0B_0 &|& D_1 B_1 &|& \dots &|&  D_{L-1} B_{L-1}
\end{pmatrix},
\end{equation}
where $D_\ell=\diag\left( t_\ell(\alpha_k)\right)_{0\leq k<n}$,
and $B_\ell = \left[ r^j(\alpha_k) \right]_{0\leq k < n, 0\leq j<m_\ell}$.
\end{theorem}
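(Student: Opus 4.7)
The plan is to reduce the claim to a statement about linear independence in $\A$, then transfer it via the vector-space isomorphism $\Delta$ in \eqref{eq:isomorphism} to a rank statement about $M'$.

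First I would recall the reformulation given just before the theorem: by Theorem~\ref{thm:dimension_of_tB}, each coset $t_\ell(x)\B$ has dimension $m_\ell$ with explicit basis $\{t_\ell(x), t_\ell(x)r(x), \dots, t_\ell(x)r^{m_\ell-1}(x)\}$. Comparing dimensions in \eqref{eq:transversal}, the direct-sum decomposition $\A = \bigoplus_\ell t_\ell(x)\B$ holds as vector spaces if and only if $m_0+\dots+m_{L-1}=n$ and the concatenation $b'$ in \eqref{eq:bases_concatenation} is a basis of $\A$. So the task reduces to showing that $b'$ is an $n$-element basis of $\A$ if and only if $M'$ is a full-rank $n\times n$ matrix.

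Next I would apply $\Delta$ from \eqref{eq:isomorphism}, which is an isomorphism of $\C$-vector spaces between $\A$ and $\C^n$. A list in $\A$ is a basis precisely when its image under $\Delta$ is a basis of $\C^n$, i.e., forms an $n\times n$ full-rank matrix. The key computation is then to identify the images: for each $\ell$ and each $0\le j<m_\ell$,
\begin{equation*}
\Delta\bigl(t_\ell(x)r^j(x)\bigr) = \bigl(t_\ell(\alpha_0)r^j(\alpha_0),\ \dots,\ t_\ell(\alpha_{n-1})r^j(\alpha_{n-1})\bigr)^T,
\end{equation*}
which is precisely the $j$-th column of $D_\ell B_\ell$. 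Hence the matrix whose columns are the $\Delta$-images of the elements of $b'$, listed in the order of \eqref{eq:bases_concatenation}, is exactly $M'$ in \eqref{eq:transversal_condition_matrix}.

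Putting the two steps together gives both implications at once: if $T$ is a transversal, then $b'$ is a basis of $\A$ of size $n$, so $M'$ has $n$ columns and its columns are linearly independent, making it a full-rank $n\times n$ matrix; conversely, if $M'$ is a full-rank $n\times n$ matrix, then in particular it has $n$ columns (forcing $\sum_\ell m_\ell=n$) and its columns are linearly independent, so $b'$ is a basis of $\A$ and the decomposition \eqref{eq:transversal} holds. I do not anticipate a real obstacle here; the only point requiring care is bookkeeping—matching the column ordering of $M'$ to the ordering of $b'$ and verifying that $D_\ell B_\ell$ indeed has $(k,j)$-entry $t_\ell(\alpha_k)r^j(\alpha_k)$ so that the identification with $\Delta$ is exact.
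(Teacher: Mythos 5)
Your proposal is correct and follows essentially the same route as the paper: reduce transversality to the statement that $b'$ from~\eqref{eq:bases_concatenation} is an $n$-element basis of $\A$, then observe that $\Delta$ sends the element $t_\ell(x)r^j(x)$ of $b'$ to the $j$-th column of $D_\ell B_\ell$, so that $b'$ is a basis iff $M'$ is square of size $n$ and full rank. Your write-up simply spells out the column-matching computation and the two implications that the paper compresses into a one-line observation.
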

\\
\begin{proof}
The proof is similar to the proofs of Theorems~\ref{thm:dimension_of_B} and~\ref{thm:dimension_of_tB}.
Observe that the $k$-th element of $b'$ in ~\eqref{eq:bases_concatenation}
is mapped to the $k$-th column of $M'$ in ~\eqref{eq:transversal_condition_matrix}
by the isomorphism $\Delta$ in~\eqref{eq:isomorphism}. Hence, $b'$ is a basis in $\A$
if and only if $M'$ has exactly $n$ columns and $\rank{M'}=n$.
\end{proof}

It follows from Theorem~\ref{thm:existence_transversal} that for any algebra $\A$
and its subalgebra $\B$ there always exists a transversal. For example, we can choose
$T=\mylist{t_0(x),\dots,t_{n-1}(x)}$, where $t_\ell(\alpha_k)=0$ for $\ell\neq k$ and $t_\ell(\alpha_\ell)\neq 0$.
In this case $M'=\diag\left(t_\ell(\alpha_\ell)\right)_{0\leq\ell < n}$
in~\eqref{eq:transversal_condition_matrix}
is a full-rank diagonal matrix.

\begin{example}
\label{ex:Transversal}
\end{example}
Consider the subalgebras constructed in Example~\ref{ex:Subalgebra}.

For $\B_1=\gen{x^2}$ of dimension 2, we can choose the transversal $T=\mylist{1,x}$,
since $\mylist{1,x^2}\cup\mylist{x,x^3}$ is a basis for $\A$.
Since $x$ maps $\alpha$ to $\mylist{1,-i,-1,i}$, we have $\alpha'=\mylist{1,-i,-1,i}$ and
$\beta'=\mylist{1,-1}$. Hence, $q'(y)=(y-1)(y+1)$ and $x\B_1\cong\C[y]/(y^2-1)$ is of dimension 2.

For $\B_2=\gen{(x+x^{-1})/2}$ of dimension 3, we can choose the transversal
$T=\mylist{1,(x-x^{-1})/2}$, since the corresponding matrix
$$
M' =
\left(
\begin{array}{rrrr}
1 & 1 & 1 &  \\
1 &  &  & -i \\
1 & -1 & 1 &  \\
1 &  &  & i
\end{array}
\right)
$$
from~\eqref{eq:transversal_condition_matrix} has full rank.
Since $(x-x^{-1})/2$ maps $\alpha$ to $\mylist{0,-i,0,i}$,
we obtain $\alpha'=\mylist{-i,i}$, $\beta'=\mylist{0}$, and thus $q'(y)=y$.
Hence, $(x-x^{-1})/2\cdot \B_2\cong\C[y]/y$ is of dimension 1.

\section{Decomposition of polynomial transforms using induction}
\label{sec:MainTheorem}

In this section we use the induction~\eqref{eq:regular_induction} to
express the polynomial transform of $\A$ via the polynomial transforms of
each $t_\ell(x)\B\cong\C[y]/q'_\ell(y)$ in~\eqref{eq:transversal}.

As before, we consider $\A=\C[x]/p(x)$, where $p(x)=\prod_{k=0}^{n-1}{(x-\alpha_k)}$.
We view it as a regular $\A$-module with the chosen basis $b=\mylist{p_0(x),\dots,p_{n-1}(x)}$.

Let $\B=\gen{r(x)}\leq\A$ be a subalgebra generated by $r(x)\in\A,$
and $\B\cong\C[y]/q(y)$ according to Theorem~\ref{thm:algebraic_structure_of_B},
where $q(y)=\prod_{j=0}^{m-1}{(y-\beta_j)}$
and $\beta=\mylist{\beta_0,\dots,\beta_{m-1}}$.

Suppose $T=\mylist{t_0(x),\dots,t_{L-1}(x)}$ is a transversal of $\B$ in $\A$.
Let each $t_\ell(x)\B$ in~\eqref{eq:transversal}
be identified with a $\C[y]/q(y)$-module $\C[y]/q^{(\ell)}(y)$
according to Theorem~\ref{thm:algebraic_structure_of_tB},
where $q^{(\ell)}(y)=\prod_{\beta_j\in\beta^{(\ell)}}{(y-\beta_j)}$
and $m_\ell=|\beta^{(\ell)}|$.
The basis $b^{(\ell)}=\mylist{ b^{(\ell)}_0(y),\dots,b^{(\ell)}_{m_\ell-1}(y)}$ of $\C[y]/q^{(\ell)}(y)$
corresponds to the basis
$\mylist{ t_\ell(x)b^{(\ell)}_0(r(x)),\dots, t_\ell(x)b^{(\ell)}_{m_\ell-1}(r(x)) }$
of $t_\ell(x)\B$.
Hence, the corresponding polynomial transform~\eqref{eq:PT_tB} is $\PT_{b^{(\ell)},\beta^{(\ell)}}$.

\begin{theorem}
\label{thm:PT_via_induction}
Given the induction~\eqref{eq:regular_induction},
the polynomial transform $\PT_{b,\alpha}$ can be decomposed as
\begin{equation}
\label{eq:PT-via-induction}
\PT_{b,\alpha} =
\begin{pmatrix}
D_0M_0 &| &D_1M_1 &| &... &| &D_{L-1}M_{L-1}
\end{pmatrix}
\Big( \bigoplus_{\ell=0}^{L-1}\PT_{b^{(\ell)},\beta^{(\ell)}} \Big)
B.
\end{equation}
Here, $B$ is the base change matrix from the basis $b$ to the concatenation of bases
$$
\bigcup_{\ell=0}^{L-1}{ \mylist{ t_\ell(x)b^{(\ell)}_0(r(x)),\dots, t_\ell(x)b^{(\ell)}_{m_\ell-1}(r(x)) } }.
$$
Each $D_\ell=\diag\left( t_\ell(\alpha_k) \right)_{0\leq k<n}$ is a diagonal matrix.
Each $M_\ell$ is an $n\times m_\ell$ matrix whose $(k,j)$-th element
is $1$ if $r(\alpha_k)$ is equal to the $j$-th element of $\beta^{(\ell)}$, and
$0$ otherwise.
$\oplus$ denotes the direct sum of matrices:
$$
\bigoplus_{\ell=0}^{L-1}\PT_{b^{(\ell)},\beta^{(\ell)}}
=
\begin{pmatrix}
\PT_{b^{(0)},\beta^{(0)}} \\
& \PT_{b^{(1)},\beta^{(1)}}\\
&& \ddots \\
&&&\PT_{b^{(L-1)},\beta^{(L-1)}}
\end{pmatrix}.
$$
\end{theorem}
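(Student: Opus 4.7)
The plan is to evaluate the matrix on the right-hand side of \eqref{eq:PT-via-induction} on an arbitrary input vector $x \in \C^n$, interpret it as the coordinate vector $\coord{s(x)}$ of some $s(x) \in \A$ in basis $b$, and verify via \eqref{eq:FT} that the result equals $(s(\alpha_0),\dots,s(\alpha_{n-1}))^T = \PT_{b,\alpha}\, x$. I will propagate the vector through the three factors in order, at each step identifying the intermediate vector with a concrete algebraic object.

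First I would apply $B$, which by its definition as a change-of-basis matrix rewrites the coordinates of $s(x)$ in the concatenated basis $\bigcup_{\ell}\mylist{t_\ell(x)b^{(\ell)}_0(r(x)),\dots,t_\ell(x)b^{(\ell)}_{m_\ell-1}(r(x))}$. This automatically partitions $Bx$ into $L$ blocks $y^{(0)},\dots,y^{(L-1)}$ with $y^{(\ell)}\in\C^{m_\ell}$. By Theorem~\ref{thm:algebraic_structure_of_tB}, under $\eta_\ell^{-1}$ the vector $y^{(\ell)}$ corresponds to an element $u_\ell(y) \in \C[y]/q^{(\ell)}(y)$ in basis $b^{(\ell)}$, and equivalently to $s_\ell(x) := t_\ell(x)\,u_\ell(r(x)) \in t_\ell(x)\B$. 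By the transversal property~\eqref{eq:transversal} these pieces reassemble as $s(x) = s_0(x) + \dots + s_{L-1}(x)$.

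Next I would apply $\bigoplus_\ell \PT_{b^{(\ell)},\beta^{(\ell)}}$, whose $\ell$-th block, by \eqref{eq:PT_tB}, sends $y^{(\ell)}$ to the vector $\bigl(u_\ell(\beta_j)\bigr)_{\beta_j \in \beta^{(\ell)}} \in \C^{m_\ell}$. The final step is to verify that $(D_\ell M_\ell)\bigl(u_\ell(\beta_j)\bigr)_{\beta_j \in \beta^{(\ell)}}$ equals the column $(s_\ell(\alpha_k))_{0\le k<n}$. I split into two cases per row index $k$. If $t_\ell(\alpha_k) \neq 0$, then by the definition of $\beta^{(\ell)}$ (Theorem~\ref{thm:algebraic_structure_of_tB}) there is a unique $\beta_j \in \beta^{(\ell)}$ with $r(\alpha_k) = \beta_j$; the $k$-th row of $M_\ell$ selects the entry $u_\ell(\beta_j)$, and $D_\ell$ multiplies it by $t_\ell(\alpha_k)$, giving $t_\ell(\alpha_k)\,u_\ell(r(\alpha_k)) = s_\ell(\alpha_k)$ as required. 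If $t_\ell(\alpha_k) = 0$, then $s_\ell(\alpha_k) = 0$ and $D_\ell$ zeros the $k$-th row regardless of what $M_\ell$ contains. Summing the contributions over $\ell$ gives $\sum_\ell s_\ell(\alpha_k) = s(\alpha_k)$ in coordinate $k$, which matches $\PT_{b,\alpha}\, x$.

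The main obstacle is the bookkeeping around the annihilation phenomenon: when $t_\ell(\alpha_k)=0$ we may have $r(\alpha_k) \notin \beta^{(\ell)}$, so $M_\ell$ is not a priori well-defined in that row. The rigor hinges on observing that any choice of a zero row there is harmless because $D_\ell$ kills that row anyway, so the product $D_\ell M_\ell$ is unambiguously the correct matrix; once this is clear, the rest is a direct unrolling of the isomorphisms from Sections~\ref{sec:Subalgebra}--\ref{sec:Induction}.
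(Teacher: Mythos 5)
Your proof is correct and takes essentially the same route as the paper: both rest on the factorization $\PT_{b,\alpha} = \PT_{b',\alpha}\, B$ (where $B$ is the base change to the concatenated basis $b'$) followed by the observation that the $\ell$-th block of $\PT_{b',\alpha}$, having $(k,j)$-th entry $t_\ell(\alpha_k)\,b^{(\ell)}_j(r(\alpha_k))$, factors as $D_\ell M_\ell \PT_{b^{(\ell)},\beta^{(\ell)}}$. The only difference is one of presentation---you verify the identity by pushing an arbitrary input vector through the three factors, whereas the paper manipulates the matrix entries directly---and your explicit case split on whether $t_\ell(\alpha_k)=0$ makes the annihilation subtlety (that a zero row of $D_\ell$ renders the corresponding row of $M_\ell$ immaterial) slightly more transparent than the paper's treatment.
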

\begin{proof}
We prove the theorem for $L=2;$ that is, for $\A=t_0(x)\B\oplus t_1(x)\B$.
The proof for arbitrary $L$ is analogous.

Let  $\B\cong\C[y]/q(y)$ according to Theorem~\ref{thm:algebraic_structure_of_B},
where $q(y)=\prod_{j=0}^{m-1}{(y-\beta_j)}$
and $\beta=\mylist{\beta_0,\dots,\beta_{m-1}}$.
For $\ell\in\{0,1\}$, let $t_\ell(x)\B~\cong\C[y]/q^{(\ell)}(y)$
according to Theorem~\ref{thm:algebraic_structure_of_tB},
where $q^{(\ell)}(y)=\prod_{\beta_j\in\beta^{(\ell)}}{(y-\beta_j)}$
and $m_\ell=|\beta^{(\ell)}|$.
Let $b^{(\ell)}=\mylist{ b^{(\ell)}_0(y),\dots,b^{(\ell)}_{m_0-1}(y)}$
be a basis of $\C[y]/q^{(\ell)}(y)$.

Let $t_\ell(x)b^{(\ell)}(r(x)) = \mylist{ t_\ell(x)b^{(\ell)}_0(r(x)),\dots,t_\ell(x)b^{(\ell)}_{m_0-1}(r(x))}$.
As we established in Theorem~\ref{thm:existence_transversal},
$b'=t_0(x)b^{(0)}(r(x)) \bigcup t_1(x)b^{(1)}(r(x))$ is a basis of $\A$.
The original basis $b$ can be expressed in the new basis $b'$ as
$p_k(x)=\sum_{\ell=0}^{m_0-1}{B_{k,\ell}t_0(x) b^{(0)}_\ell(r(x))} +
\sum_{\ell=0}^{m_1-1}{C_{k,\ell}t_1(x)b^{(1)}_\ell(r(x))}$.
Hence, if $B$ is the base change matrix
from $b$ to $b'$, then
\begin{equation}
\label{eq:PT-new-basis}
\PT_{b,\alpha}=\PT_{b',\alpha}\cdot B.
\end{equation}
The $\ell$-th column of $B$ is
$\left(B_{0,\ell},\dots,B_{m_0-1,\ell},C_{0,\ell},\dots,C_{m_1-1,\ell}\right)^T.$

Next, observe that
\begin{equation}
\PT_{b',\alpha}=
\begin{pmatrix}
\PT_{t_0(x)b^{(0)}(r(x)),\alpha}  \mid  \PT_{t_1(x)b^{(1)}(r(x)),\alpha}
\end{pmatrix}.
\end{equation}
For each $\ell$, the $(k,j)$-th element of $\PT_{t_\ell(x)b^{(\ell)}(r(x)),\alpha}$
is $t_\ell(\alpha_k) b^{(\ell)}(r(\alpha_k))$. Hence,
\begin{equation}
\label{eq:PT-decomposed}
\PT_{t_\ell(x)b^{(\ell)}(r(x)),\alpha} = D_\ell \cdot M_\ell \cdot \PT_{b^{(\ell)},\beta^{(\ell)}},
\end{equation}
where $M_\ell$ is an $n\times m_\ell$ matrix whose $(k,j)$-th element is $1$ if $r(\alpha_k)$
equals to the $j$-th element of $\beta^{(\ell)}$, and $0$ otherwise;
and $D_\ell=\diag\Big(t_\ell(\alpha_k)\Big)_{0\leq k\leq n-1}$.

Hence, from (\ref{eq:PT-new-basis}-\ref{eq:PT-decomposed})  we obtain the desired decomposition:
\begin{equation}
\PT_{b,\alpha}
=
\begin{pmatrix}
D_0M_0 \mid D_1M_1
\end{pmatrix}
\cdot
\begin{pmatrix}
\PT_{b^{(0)},\beta^{(0)}} \oplus \PT_{b^{(1)},\beta^{(1)}}
\end{pmatrix}
\cdot B.
\end{equation}
\end{proof}

\begin{corollary}
\label{cor:Repeatition-of-column}
Consider the $n\times m$ matrix $M$ whose $(k,j)$-th element
is $1$ if $r(\alpha_k)=\beta_j$ and $0$ otherwise. Then
\begin{enumerate}
\item
$M$ contains exactly $n$ 1s and $n(m-1)$ 0s .

\item
Each matrix $M_\ell$ in Theorem~\ref{thm:PT_via_induction}
is a submatrix of $M$. It contains the $j$-th
column of $M$ if and only if $\beta_j\in\beta^{(\ell)}$.

\item
If the number of non-zero elements in the $j$-th column of $M$ is $c_j$,
then there are precisely $c_j$ matrices among $M_0,\dots,M_{L-1}$ that contain this column.
\end{enumerate}
\end{corollary}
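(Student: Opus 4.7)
The plan is to dispatch the three claims in order, since parts (1) and (2) are immediate from the definitions while part (3) contains all the content. For part (1), the definition of $\beta$ forces $r(\alpha_k)$ to equal exactly one $\beta_{j(k)}$ for each $k$, so each row of $M$ contains exactly one nonzero entry; with $n$ rows this gives $n$ ones and $nm - n = n(m-1)$ zeros. For part (2), I would simply compare definitions: by Theorem~\ref{thm:PT_via_induction}, column $s$ of $M_\ell$ is the indicator vector of $\{k : r(\alpha_k) = \beta^{(\ell)}_s\}$, which agrees verbatim with column $j$ of $M$ whenever $\beta^{(\ell)}_s = \beta_j$; hence $M_\ell$ is precisely the column-submatrix of $M$ indexed by those $\beta_j$ that lie in $\beta^{(\ell)}$.

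For part (3), set $d_j = |\{\ell : \beta_j \in \beta^{(\ell)}\}|$; the task is to prove $d_j = c_j$. My approach is to decompose $\A$ as a $\B$-module in two different ways and invoke uniqueness of semisimple decomposition. On one side, the transversal $\A = \bigoplus_\ell t_\ell(x)\B$ combined with $t_\ell(x)\B \cong \C[y]/q^{(\ell)}(y)$ from Theorem~\ref{thm:algebraic_structure_of_tB} and the Chinese Remainder decomposition of $q^{(\ell)}(y) = \prod_{\beta_j \in \beta^{(\ell)}}(y - \beta_j)$ yields
\[ \A \cong \bigoplus_j \bigl(\C[y]/(y-\beta_j)\bigr)^{\oplus d_j}. \]
On the other side, starting from $\A \cong \bigoplus_k \C[x]/(x-\alpha_k)$ as $\A$-modules (CRT applied to $p(x)$) and restricting to $\B$, the generator $r(x)$ acts on the one-dimensional summand $\C[x]/(x-\alpha_k)$ as the scalar $r(\alpha_k) = \beta_{j(k)}$, so via $\kappa$ this summand is isomorphic, as a $\B$-module, to $\C[y]/(y - \beta_{j(k)})$; grouping by the value of $\beta_{j(k)}$ gives
\[ \A \cong \bigoplus_j \bigl(\C[y]/(y-\beta_j)\bigr)^{\oplus c_j}. \]

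Since $\B \cong \C[y]/q(y)$ is semisimple (the $\beta_j$ are distinct) and the $\C[y]/(y-\beta_j)$ are pairwise non-isomorphic simple $\B$-modules, matching multiplicities in the two decompositions forces $d_j = c_j$. The main subtlety is the change-of-module-structure step in the second decomposition: one must recognize that the $\A$-module $\C[x]/(x-\alpha_k)$, restricted to $\B$ via $\kappa$, collapses to the single simple $\B$-module determined by the scalar $\beta_{j(k)}$ at which $r(x)$ acts. Once that identification is in place, uniqueness of semisimple decomposition closes the counting and the corollary follows.
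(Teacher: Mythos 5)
The paper states this corollary with no accompanying proof, evidently treating it as an observation that follows by inspection from the constructions in Theorem~\ref{thm:PT_via_induction}, so there is no ``paper's proof'' to compare against. Your argument is correct. Parts (1) and (2) are, as you say, mere bookkeeping: each $\alpha_k$ maps to exactly one $\beta_j$ under $r$, and the column of $M_\ell$ sitting in slot $s$ is the indicator of $\{k : r(\alpha_k) = \beta^{(\ell)}_s\}$, which is literally a column of $M$. Part (3) is where the content lies, and your route — decompose the regular $\B$-module $\A$ once along the transversal using Theorem~\ref{thm:algebraic_structure_of_tB} plus CRT on each $q^{(\ell)}(y)$, and once by restricting the $\A$-CRT decomposition $\A \cong \bigoplus_k \C[x]/(x-\alpha_k)$ to $\B$, then invoke uniqueness of multiplicities of simple constituents in a semisimple module — is sound. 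The hypotheses you need are all present: $\B\cong\C[y]/q(y)$ is semisimple since the $\beta_j$ are distinct, the $\C[y]/(y-\beta_j)$ are the pairwise non-isomorphic simples, and under $\kappa$ the restriction of $\C[x]/(x-\alpha_k)$ to $\B$ is indeed $\C[y]/(y-\beta_{j(k)})$ because $r(x)$ acts there as the scalar $r(\alpha_k)$. For what it buys: your argument is conceptual and exposes that (3) is really a statement about decomposition multiplicities. There is also a more elementary linear-algebra route you could substitute: the block row matrix $\bar{M}=(D_0M_0\mid\cdots\mid D_{L-1}M_{L-1})$ is invertible (it equals $\PT_{b,\alpha}B^{-1}\bigl(\bigoplus_\ell\PT_{b^{(\ell)},\beta^{(\ell)}}\bigr)^{-1}$), and the $c_j$ rows indexed by $\{k:r(\alpha_k)=\beta_j\}$ are supported only on the $d_j$ columns coming from blocks $\ell$ with $\beta_j\in\beta^{(\ell)}$; invertibility forces $c_j\le d_j$ for every $j$, and since $\sum_j c_j=n=\sum_\ell m_\ell=\sum_j d_j$, equality holds throughout. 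Either proof closes the gap the paper leaves implicit.
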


\mypar{Discussion} The three factors in \eqref{eq:PT-via-induction}
correspond to the decomposition~\eqref{eq:isomorphism} of the regular
module $\A=\M=\C[x]/p(x)$ in three steps:

\textit{Step 1.}
$\A$ is represented as an induction~\eqref{eq:regular_induction} by
changing the basis in $\A$ to the concatenation of bases $b^{(\ell)}$ of $t_\ell(x)\B,$
using the base change matrix $B$.

\textit{Step 2.}
Each $t_\ell(x)\B$ is decomposed into a direct sum of irreducible $\B$-submodules,
using the corresponding polynomial transform $\PT_{b^{(\ell)},\beta^{(\ell)}}.$

\textit{Step 3.}
The resulting direct sum of irreducible $\B$-modules is decomposed into a
direct sum of irreducible $\A$-modules, using the matrix $M.$

The factorization~\eqref{eq:PT-via-induction} is a fast algorithm for
$\PT_{b,\alpha}$ if the matrices $B$ and $M$ have sufficiently low costs,
since the recursive nature of the second step allows for
repeated application of Theorem~\ref{thm:PT_via_induction}. We
illustrate this with two examples of novel algorithms derived using this theorem in
Section~\ref{sec:FastTransforms}.

\mypar{Special case: factorization of \boldmath{p(x)}} A special case
of Theorem~\ref{thm:PT_via_induction} has been derived
in~\cite{Pueschel:03a,Pueschel:08b}.  Namely, assume that
$\A=\C[x]/p(x),$ and we can decompose $p(x)=q(r(x)).$ Then
$\B=\gen{r(x)}\cong \C[y]/q(y)$, and any basis
$t=\mylist{1,t_1(x),\dots,t_{k-1}(x)}$ of $\C[x]/r(x)$ is a
transversal of $\B$ in $\A$. This leads to the following result.

\begin{corollary}
\label{cor:Decomposition}
Choose $c=\mylist{c_0(y),\dots,c_{m-1}(y)}$ as the basis of
$\C[y]/q(y)$.  Denote the roots of $r(x)-\beta_j$ as $\gamma^{(j)} =
\mylist{\gamma_0^{(j)},\dots,\gamma_{k-1}^{(j)}}$.  Notice that
$\bigcup_{j=0}^{m-1}\mylist{\gamma_0^{(j)},\dots,\gamma_{k-1}^{(j)}}$
is simply a permutation of $\mylist{\alpha_0,\dots,\alpha_{n-1}},$ and
denote the corresponding permutation matrix as $P$.  Then, the
polynomial transform decomposition~\eqref{eq:PT-via-induction} has the
form
\begin{equation}
\label{eq:decomposition}
\PT_{b,\alpha} = P^{-1} \Big(\bigoplus_{j=0}^{m-1}\PT_{t,\gamma^{(j)}} \Big)
L^n_m \Big(I_k \otimes \PT_{c,\beta} \Big) B.
\end{equation}
Here, $\otimes$ denotes the tensor product of matrices.
\end{corollary}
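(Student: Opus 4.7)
The plan is to derive \eqref{eq:decomposition} as a specialization of Theorem \ref{thm:PT_via_induction} to the case $p(x) = q(r(x))$, then to rewrite the resulting horizontal block row in Kronecker-product / stride-permutation form.

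First I would verify that $T = \mylist{1, t_1(x), \dots, t_{k-1}(x)}$ really is a transversal of $\B$ in $\A$. Since $p(x) = q(r(x))$ is separable, each $\beta_j$ has exactly $k$ distinct preimages under $r$ in $\alpha$, namely the fiber $\gamma^{(j)}$. Because $T$ is a basis of $\C[x]/r(x)$, each $t_\ell$ is a nonzero polynomial of degree less than $k$ and therefore cannot vanish on a whole fiber; by Theorem \ref{thm:dimension_of_tB} this gives $\dim(t_\ell(x)\B) = m$ for every $\ell$, so the dimensions sum to $km = n$. The full-rank criterion of Theorem \ref{thm:existence_transversal} then follows from the invertibility of each $\PT_{t,\gamma^{(j)}}$ (a polynomial transform for the separable polynomial $r(x) - \beta_j$ in the basis $t$).

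Once this is in hand, I would invoke Theorem \ref{thm:PT_via_induction}. Here $\beta^{(\ell)} = \beta$ and $q^{(\ell)}(y) = q(y)$ for every $\ell$; choosing the common basis $b^{(\ell)} = c$ forces $M_\ell = M$ and $\PT_{b^{(\ell)}, \beta^{(\ell)}} = \PT_{c, \beta}$. Using $\bigoplus_{\ell=0}^{k-1} \PT_{c, \beta} = I_k \otimes \PT_{c, \beta}$, the theorem specializes to
\begin{equation*}
\PT_{b, \alpha} = \begin{pmatrix} D_0 M & | & \dots & | & D_{k-1} M \end{pmatrix} \bigl( I_k \otimes \PT_{c, \beta} \bigr) B .
\end{equation*}
It then remains to show
\begin{equation*}
\begin{pmatrix} D_0 M & | & \dots & | & D_{k-1} M \end{pmatrix} = P^{-1} \Bigl( \bigoplus_{j=0}^{m-1} \PT_{t, \gamma^{(j)}} \Bigr) L^n_m ,
\end{equation*}
which I would verify by entrywise comparison after left-multiplying by $P$. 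After $P$ groups rows by fiber, so that row $jk + i$ corresponds to $\gamma_i^{(j)}$, the $j$-th column of $M$ is supported exactly on rows $jk, \dots, jk + k - 1$. Combined with $D_\ell = \diag\bigl( t_\ell(\alpha_k) \bigr)$, the $(jk + i, \, \ell m + j')$-entry of $P (D_0 M | \dots | D_{k-1} M)$ is $\delta_{j, j'}\, t_\ell(\gamma_i^{(j)})$. The stride $L^n_m$ sends column $\ell m + j'$ to column $j' k + \ell$ of $\bigoplus_j \PT_{t, \gamma^{(j)}}$, whose $(jk + i, \, j' k + \ell)$-entry is also $\delta_{j, j'}\, t_\ell(\gamma_i^{(j)})$. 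The two matrices agree, and substituting back yields \eqref{eq:decomposition}.

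The main obstacle is the bookkeeping needed to keep the conventions for $L^n_m$, the direct sum $\bigoplus_j \PT_{t, \gamma^{(j)}}$, and the row permutation $P$ mutually consistent. Beyond that, the corollary is a direct reorganization of the output of Theorem \ref{thm:PT_via_induction} and requires no additional algebraic machinery.
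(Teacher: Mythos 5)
The paper states this corollary without its own proof (it points to prior work and the set-up paragraph that identifies $T$ as a transversal), so there is nothing to compare line-by-line. Your derivation is correct and is precisely the natural way to extract \eqref{eq:decomposition} from Theorem~\ref{thm:PT_via_induction}: you verify that $T$ is a transversal with all cosets of full dimension $m$ (using that a nonzero representative of degree $<k$ cannot vanish on an entire $k$-point fiber), specialize the theorem with $\beta^{(\ell)}=\beta$, $q^{(\ell)}=q$, $b^{(\ell)}=c$, and then check entrywise that $P\,(D_0 M\,|\cdots|\,D_{k-1}M)=\bigl(\oplus_j\PT_{t,\gamma^{(j)}}\bigr)L^n_m$; your index bookkeeping for $L^n_m$ is consistent with the paper's convention. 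One small presentational remark: you invoke the invertibility of each $\PT_{t,\gamma^{(j)}}$ to establish the full-rank condition of Theorem~\ref{thm:existence_transversal} before you have stated the block identity that makes this visible; since the identity $(D_0 M\,|\cdots|\,D_{k-1}M)=P^{-1}\bigl(\oplus_j\PT_{t,\gamma^{(j)}}\bigr)L^n_m$ is a pure matrix computation that does not presuppose the transversal property, it would be cleaner to prove it first and then use it both for the rank argument and for the final substitution. This is a matter of ordering, not a gap.
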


Corollary~\ref{cor:Decomposition} has been used to derive a large
class of fast algorithms for real and complex DFTs, and DCTs and
DSTs~\cite{Pueschel:03a,Pueschel:08b,Voronenko:09}.
Theorem~\ref{thm:PT_via_induction} further generalizes this approach,
and, as we show in the following example and in
Section~\ref{sec:FastTransforms}, also yields fast algorithms not based on
Corollary~\ref{cor:Decomposition}.

\begin{example}
\label{ex:Decomposition}
\end{example}
Consider the polynomial algebra $\A=\C[x]/(x^4-1)$ with basis $b=\mylist{1,x,x^2,x^3}.$ As we
showed in Example~\ref{ex:DFT}, the corresponding polynomial transform is $\PT_{b,\alpha}=\DFT_4.$

We continue from Example~\ref{ex:Transversal}.
First, consider $\B_1=\gen{x^2}$ and the induction
$\A = \B_1 \oplus x\B_1.$
Let us choose $b^{(0)}=\mylist{1,y}$ as the basis of $\C[y]/(y^2-1)\cong \B_1;$
it corresponds to the basis $\mylist{1,x^2}$ of $\B_1.$
We then choose $b^{(1)}=\mylist{1,y}$ as the basis of $\C[y]/(y^2-1)\cong x\B_1;$
it corresponds to the basis $\mylist{x,x^3}$ of $x\B_1.$
According to Theorem~\ref{thm:PT_via_induction},
$D_0=\diag\Big( 1,1,1,1 \Big),$ $D_1=\diag\Big( 1,-i,-1,i \Big),$
$$
M_0 = M_1 =
\left(
\begin{array}{rr}
1\\
&1\\
1\\
&1
\end{array}
\right),
\ \ \ \PT_{b^{(0)},\beta^{(0)}}=\PT_{b^{(1)},\beta^{(1)}}=
\left(
\begin{array}{rr}
1&1\\
1&-1
\end{array}
\right)
=\DFT_2,
$$
and $B$ is the base change matrix from $\mylist{1,x,x^2,x^3}$ to
$\mylist{1,x^2}\cup\mylist{x,x^3}.$
Hence,
\begin{equation}
\label{eq:DFT4_example_1}
\DFT_4 =
\left(
\begin{array}{rrrr}
1&&1\\
&1&&-i\\
1&&-1\\
&1&&i
\end{array}
\right)
\left(
\begin{array}{rr}
\DFT_2\\
&\DFT_2
\end{array}
\right)
\left(
\begin{array}{rrrr}
1\\
&&1\\
&1\\
&&&1
\end{array}
\right).
\end{equation}
As we show in Section~\ref{sec:Known_fast_transforms}, \eqref{eq:DFT4_example_1} is
exactly the Cooley-Tukey FFT for $\DFT_4$~\cite{Cooley:65}.

Next, consider $\B_2=\gen{(x+x^{-1})/2}$ and the induction $\A=\B_2
\oplus (x-x^{-1})/2\cdot\B_2.$ Let us choose
$b^{(0)}=\mylist{T_0(y),T_1(y),T_2(y)}=\mylist{1,y,2y^2-1}$ as the
basis of $\C[y]/(y^3-y)\cong\B_2;$ it corresponds to the basis
$\mylist{1,(x+x^{-1})/2,(x^2+x^{-2})/2}$ of $\B_2.$ We then choose
$b^{(1)}=\mylist{1}$ as the basis of
$\C[y]/y\cong(x-x^{-1})/2\cdot\B_2;$ it corresponds to the basis
$\mylist{(x-x^{-1})/2}$ of $(x-x^{-1})/2\cdot\B_2.$ According to
Theorem~\ref{thm:PT_via_induction}, $D_0=\diag\Big( 1,1,1,1 \Big),$
$D_1=\diag\Big( 0,-i,0,i \Big),$
$\PT_{b^{(1)},\beta^{(1)}}=\left(1\right)=\DSTI_1,$
$$
M_0 =
\left(
\begin{array}{rrr}
1\\
&1\\
&&1\\
&1
\end{array}
\right),
M_1 =
\left(
\begin{array}{r}
\\
1\\
\\
1
\end{array}
\right),
\PT_{b^{(0)},\beta^{(0)}}=
\left(
\begin{array}{rrr}
1&1&1\\
1&&-1\\
1&-1&1
\end{array}
\right)
=\DCTI_3,\\
$$
and $B$ is the base change matrix from $\mylist{1,x,x^2,x^3}$ to
$\mylist{1,(x+x^{-1})/2,(x^2+x^{-2})/2}\cup\mylist{(x-x^{-1})/2}.$
Hence,
\begin{equation}
\label{eq:DFT4_example_2}
\DFT_4 =
\left(
\begin{array}{rrrr}
1\\
&1&&-i\\
&&1\\
&1&&i
\end{array}
\right)
\left(
\begin{array}{rr}
\DCTI_3\\
&\DSTI_1
\end{array}
\right)
\left(
\begin{array}{rrrr}
1\\
&1&&1\\
&&1\\
&1&&-1
\end{array}
\right).
\end{equation}
As we show in Section~\ref{sec:New_fast_transforms}, \eqref{eq:DFT4_example_2}
is the Britanak-Rao algorithm for $\DFT_4$~\cite{Britanak:99}.

\section{Fast Signal Transforms}
\label{sec:FastTransforms}
In this section we apply the module induction to the construction of
novel fast algorithms for trigonomatric transforms, which are the most
important polynomial transforms used in signal processing. The
efficient computation of these transforms is of crucial importance in
most applications, and makes straightforward computation using
$O(n^2)$ operations prohibitive. As mentioned in the introduction,
many $O(n\log n)$ algorithms have been derived for these transforms
(e.g.,~\cite{Bergland:68,Bracewell:84,Wang:84,Sorensen:85,Duhamel:86,Rao:90})
and the origin of these algorithms was revealed by the algebraic
approach in \cite{Pueschel:03a,Pueschel:08b,Voronenko:09}, which also
produced new algorithms.

In this paper, we complete this work through
Theorem~\ref{thm:PT_via_induction} and its application. Specifically,
we will derive two novel $O(n\log n)$ general-radix algorithm that could not be
obtained with the prior algebraic theory.

We will first briefly touch on the algebraic signal processing theory
to explain why these transforms are associated with polynomial
algebras.  Then we derive the Cooley-Tukey FFT as special case of
Theorem~\ref{thm:PT_via_induction}, which motivates why we call all
such algorithms ``Cooley-Tukey type.'' Then we derive the novel
algorithms, both of which generalize existing algorithms that had
no satisfying algebraic explanation before.

%

\subsection{Algebraic Signal Processing}

In~\cite{Pueschel:03a,Pueschel:08, Pueschel:08a, Pueschel:08b}, the
authors introduced an axiomatic approach to the signal processing called
the \emph{algebraic signal processing theory}. They observed that the
basic assumptions used in signal processing are equivalent to viewing
\emph{filters} as elements of an algebra $\A$, and \emph{signals} as
elements of an associated $\A$- module $\M$.  In particular,
in the shift-invariant signal processing of
finite discrete one-dimensional filters and signals
$\A=\M=\C[x]/p(x)$ is necessarily a polynomial algebra.
The choice of $\A=\M$, together with a bijective mapping
$\Phi$ that maps samples from $\C^n$ to signals in $\M$, defines
a \emph{signal model} $(\A,\M,\Phi)$.

The fundamental tool in signal processing is the \emph{Fourier
transform}, which computes the frequency content of a signal.  From
the algebraic point of view, the Fourier transform for a signal model
$(\A,\M,\Phi)$ is precisely the decomposition~\eqref{eq:isomorphism}.
It can be computed as a matrix-vector product~\eqref{eq:FT} with the
appropriate polynomial transform~\eqref{eq:PT}.

\subsection{Notation}
Hereafter, we use the following special matrices:

$I_n$ is the identity matrix of size $n$.

$J_n$ is the complimentary identity matrix of size $n$: its $(k,n-1-k)$-th element
is $1$ for $0\leq k<n,$ and $0$ otherwise.

$\textbf{1}_n = \begin{pmatrix}1 & 1 & \dots & 1\end{pmatrix}^T$ is a column vector of $n$ ones.

$Z_n$ is the $n\times n$ circular shift matrix:
\begin{equation*}
Z_n = \begin{pmatrix}
& 1 \\
I_{n-1}
\end{pmatrix}.
\end{equation*}

$L_k^n,$ where $k$ divides $n$, is an $n\times n$ permutation matrix that selects
elements of $0,1,\dots, n-1$ at the stride $k$; the corresponding
permutation is $ik+j \mapsto jm+i$, where $0\leq i< m$ and $0\leq j < k$.
The $(i,j)$-th element of $L_k^n$ is $1$ if $j=\floor{\frac{ik(n+1)}{n}} \mod n,$
and $0$ otherwise.

$K^n_k=(I_{m}\oplus J_{m}\oplus I_{m}\oplus\dots)L_{k}^n,$ where $k$ divides $n$,
is another permutation matrix.

$T^n_k = \diag{ \Big( w_n^{ij} \mid 0\leq i<k, 0\leq j<m \Big)}$,
where the index $i$ runs faster, and $n=km$, is a twiddle factor matrix
used in the Cooley-Tukey FFT.

Complimentary direct sum:
$$
\oslash_{j=0}^{m-1}A_j =
\begin{pmatrix}
&& A_0 \\
& \iddots \\
A_{m-1}
\end{pmatrix}.
$$

\subsection{Cooley-Tukey FFT}\label{sec:Known_fast_transforms}

We derive the general-radix Cooley-Tukey FFT using
Theorem~\ref{thm:PT_via_induction}.  As was shown in
\cite{Pueschel:03a}, Corollary~\ref{cor:Decomposition} is sufficient
in this case.

Consider $\A=\M=\C[x]/(x^n-1).$ Let $b=\mylist{1,x,\dots,x^{n-1}}$ be
the basis of $\M.$ As we showed in Example~\ref{ex:DFT}, the
corresponding polynomial transform is $\DFT_n$.  Assume $n=km$. Let
$r(x)=x^k,$ and $\B=\gen{r(x)}.$ Then $x^\ell\B \cong\C[y]/(y^m-1),$
for $\ell=0\dots k-1,$ and $\A=\oplus_{\ell=0}^{k-1}{x^\ell\B}.$
Choosing the same basis $b^{(\ell)}=\{1,y,\dots,y^{m-1}\}$ in each
$\C[y]/(y^m-1)\cong x^\ell\B$ yields
$\PT_{b^{(\ell)},\beta^{(\ell)}}=\DFT_m.$ By
Theorem~\ref{thm:PT_via_induction}, we obtain
\begin{eqnarray*}
\DFT_{km} &=& M\cdot (I_k\otimes\DFT_m)\cdot B.
\end{eqnarray*}
Here, $B = L^{km}_k$ and $M = \left( D_0M_0 | \dots | D_{k-1}M_0 \right),$
where $M_0=\textbf{1}_k \otimes I_m,$ and
$D_\ell = \diag{\Big(\omega_{km}^{\ell j}\Big)}_{0\leq j < {km} }$ for $0\leq \ell < k$.
Hence, we can rewrite
$$
M = L^{km}_k \left( I_m \otimes \DFT_k \right) T^{km}_k L^{km}_m.
$$
to obtain the well-known general-radix Cooley-Tukey FFT algorithm~\cite{Cooley:65,Pueschel:08b}:
\begin{eqnarray}
\nonumber
\DFT_{km} &=& L^{km}_k \left( I_m \otimes \DFT_k \right) T^{km}_k  L^{km}_m
\left( I_k \otimes \DFT_m \right) L^{km}_k \\
\label{eq:CT_DFT}
&=& L^{km}_k \left( I_m \otimes \DFT_k \right) T^{km}_k \left( \DFT_m \otimes I_k \right).
\end{eqnarray}

\subsection{New Fast Algorithms}
\label{sec:New_fast_transforms}
In this section, we derive novel fast general-radix algorithms for $\DFT$ and $\DCTIV$.
Each of them requires $O(n\log{n})$ operations.
To the best of our knowledge, these algorithms have not been
reported in the literature.

\mypar{General-radix Britanak-Rao FFT}
In~\cite{Britanak:99}, Britanak and Rao derived a fast algorithm
for $\DFT_{2m}$ that can be written as the factorization
$$
\DFT_{2m} =
X^{2m}_m \Big(I_{m} \oplus Z^{-1}_{m} \Big)
D^{2m}_m \Big( \DCTI_{m+1} \oplus \DSTI_{m-1}\Big) B^{2m}_m.
$$
Matrices $D^{2m}_m,$ $B^{2m}_m,$ and $X^{2m}_m$
are specified in~(\ref{eq:BritRao_B}-\ref{eq:BritRao_D}) by setting $k=1.$

In Appendix~\ref{sec:BritanakRaoDerivation}, we derive the following
general-radix version of this algorithm:
\begin{theorem}
\label{thm:BritanakRao}
\begin{eqnarray*}
\DFT_{2km} &=&
L_{k}^{2km} \Big(I_{2m}\otimes \DFT_k\Big) X^{2km}_m L_{2m}^{2km}
\Big(I_{m} \oplus Z^{-1}_{m} \oplus I_{2(k-1)m}\Big) D^{2km}_m \\
&& \times \Big( \DCTI_{m+1} \oplus \DSTI_{m-1}
\oplus I_{k-1} \otimes (\DCTII_m\oplus \DSTII_m) \Big) B^{2km}_m.
\end{eqnarray*}
Here, $D^{2km}_m$ is a diagonal matrix,
and $B^{2km}_m$ and $X^{2km}_m$ are $2$-sparse matrices
(that is, with each row containing only two non-zero entries)
specified in~(\ref{eq:BritRao_B}-\ref{eq:BritRao_D}).

This factorization is obtained by inducing a subalgebra $\B=\gen{(x^k+x^{-k})/2}$
of an algebra $\A=\C[x]/(x^{2km}-1)$ with transversal
$t_0(x)=1,$ $t_1(x)=(x^k-x^{-k})/2,$ $t_{2j}(x)=x^j(x^k+1)/2,$
and $t_{2j+1}(x)=x^j(x^k-1)/2$ for $1\leq j < k.$
\end{theorem}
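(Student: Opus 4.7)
The plan is to derive the stated factorization as an instance of Theorem~\ref{thm:PT_via_induction} applied to $\A=\C[x]/(x^{2km}-1)$ with basis $b=\{1,x,\dots,x^{2km-1}\}$, subalgebra $\B=\gen{r(x)}$ for $r(x)=(x^k+x^{-k})/2$, and the given transversal $T=\{t_0,t_1,\dots,t_{2k-1}\}$. Setting $\alpha_i=\omega_{2km}^i$, one computes $r(\alpha_i)=\cos(\pi i/m)$, so the image $\beta=\{\cos(j\pi/m):0\leq j\leq m\}$ has $m+1$ elements, and Theorem~\ref{thm:algebraic_structure_of_B} identifies $\B\cong\C[y]/q(y)$ with $q(y)=\prod_{j=0}^{m}(y-\cos(j\pi/m))$.

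Next I would establish the coset structure by determining $m_\ell=\dim(t_\ell\B)$ and $q^{(\ell)}$ via Theorem~\ref{thm:dimension_of_tB}. Since $t_0=1$ never vanishes, $m_0=m+1$ with $\beta^{(0)}=\beta$. The factor $x^k-x^{-k}$ in $t_1$ vanishes on $\alpha_i$ precisely when $\omega_m^i=1$, which excludes the $\alpha_i$'s mapped to $\pm 1$, giving $\beta^{(1)}=\{\cos(j\pi/m):1\leq j\leq m-1\}$ and $m_1=m-1$. The factor $x^k+1$ in $t_{2j}$ (for $1\leq j<k$) vanishes precisely when $\omega_{2m}^i=-1$, excluding the $\alpha_i$'s mapped to $-1$, so $\beta^{(2j)}=\{\cos(j'\pi/m):0\leq j'\leq m-1\}$ and $m_{2j}=m$; symmetrically, $m_{2j+1}=m$. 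The dimensions sum to $(m+1)+(m-1)+2(k-1)m=2km=\dim\A$, and checking that the block matrix $M'$ in~\eqref{eq:transversal_condition_matrix} has full rank confirms, via Theorem~\ref{thm:existence_transversal}, that $T$ is indeed a transversal.

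The third step is to choose bases $b^{(\ell)}$ for each $\C[y]/q^{(\ell)}(y)$ that yield the trigonometric transforms appearing in the theorem. Chebyshev polynomials of the first kind $\{T_l\}_{l=0}^m$ give $\PT_{b^{(0)},\beta^{(0)}}=\DCTI_{m+1}$ directly, while Chebyshev polynomials of the second kind give $\DSTI_{m-1}$ up to scaling for $\ell=1$. For each pair $(2j,2j+1)$ with $j\geq 1$, Chebyshev bases tailored to the factored polynomials $q^{(2j)}(y)\propto (y-1)U_{m-1}(y)$ and $q^{(2j+1)}(y)\propto (y+1)U_{m-1}(y)$ produce the block $\DCTII_m\oplus\DSTII_m$; the scalings that differ from the target are absorbed into the diagonal factor $D_m^{2km}$, and the collected base-change data from $b$ to $\bigcup_\ell\{t_\ell(x)b^{(\ell)}_p(r(x))\}_p$ yields the sparse matrix $B_m^{2km}$ after simplification.

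The main obstacle is collapsing the raw left factor $(D_0M_0\mid D_1M_1\mid\cdots\mid D_{2k-1}M_{2k-1})$ produced by Theorem~\ref{thm:PT_via_induction} into the specific product $L_k^{2km}(I_{2m}\otimes\DFT_k)X_m^{2km}L_{2m}^{2km}(I_m\oplus Z_m^{-1}\oplus I_{2(k-1)m})D_m^{2km}$. The conceptual core is that the $x^j$ prefactor in $t_{2j}$ and $t_{2j+1}$ contributes $\omega_{2km}^{ij}$ to $D_{2j}$ and $D_{2j+1}$; as $j$ ranges over $\{0,1,\dots,k-1\}$ these phases form a $k$-point DFT along each fiber indexed by a fixed value in $\beta$, and the stride permutations $L_k^{2km}$ and $L_{2m}^{2km}$ realign these fibers so that the DFTs aggregate into the factor $I_{2m}\otimes\DFT_k$. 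The remaining matrices $X_m^{2km}$ and $I_m\oplus Z_m^{-1}\oplus I_{2(k-1)m}$ encode, respectively, the sign patterns arising from $(x^k\pm 1)/2$ and the zero/shift pattern forced by annihilation. Carrying out this reorganization explicitly to match the concrete definitions of $B_m^{2km}$, $D_m^{2km}$, and $X_m^{2km}$ in~\eqref{eq:BritRao_B}--\eqref{eq:BritRao_D} is the main technical task.
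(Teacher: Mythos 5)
Your proposal follows exactly the paper's route: apply Theorem~\ref{thm:PT_via_induction} to $\A=\C[x]/(x^{2km}-1)$ with $\B=\gen{(x^k+x^{-k})/2}$, identify the coset modules $t_\ell(x)\B$ (which you correctly compute to have dimensions $m+1$, $m-1$, and $m$ twice for each $j\geq 1$), choose Chebyshev bases so that the inner block of polynomial transforms becomes $\DCTI_{m+1}\oplus\DSTI_{m-1}\oplus I_{k-1}\otimes(\DCTII_m\oplus\DSTII_m)$ up to diagonal scaling, and then massage the raw left factor $(D_0M_0\mid\cdots\mid D_{2k-1}M_{2k-1})$ into $L_k^{2km}(I_{2m}\otimes\DFT_k)X^{2km}_mL^{2km}_{2m}(I_m\oplus Z_m^{-1}\oplus I_{2(k-1)m})$ by recognizing the $\omega_{2km}^{ij}$ phases from the $x^j$ prefactor as $\DFT_k$'s along fibers. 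You correctly flag as the remaining work the explicit matrix bookkeeping (the specific $V$- and $W$-Chebyshev bases and the closed forms of $B^{2km}_m$, $X^{2km}_m$, $D^{2km}_m$), which is precisely what the paper's appendix spells out.
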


$\DFT_k$ requires $O(k\log k)$ operations; $\DCTI_{m+1},$ $\DSTI_{m-1},$ $\DCTII_m,$
and $\DSTII_m$ require $O(m\log m)$ operations each~\cite{Pueschel:03a,Pueschel:08b}.
$D^{2km}_m$ requires $n=2km$ operations and
$B^{2km}_m$ and $X^{2km}_m$ each require $3n$ operations.
Hence, the algorithm for $\DFT_{n}$ in Theorem~\ref{thm:BritanakRao}
requires $O(n\log n)$ operations.

\mypar{General-radix Wang algorithm for DCT-4}
In~\cite{Wang:84}, Wang derived a fast algorithm
for $\DCTIV_{2m}$ that can be written as the factorization
\begin{eqnarray*}
\DCTIV_{2m} &=&
K^{2m}_2
\cdot
\bigoplus_{j=0}^{m-1}
\begin{pmatrix}
\cos\frac{2m-2j-1}{8m}\pi &   (-1)^j\cos\frac{2j+1-2m}{8m}\pi \\
\cos\frac{2j+1-2m}{8m}\pi &   (-1)^{j+1}\cos\frac{2m-2j-1}{8m}\pi
\end{pmatrix} \\
&& \times (\DCTIII_m \otimes I_2) (K^{2m}_2)^T
\cdot
\begin{pmatrix}
1\\
&L_2^{2(m-1)}\cdot I_{m-1}\otimes \DFT_2\\
&&1
\end{pmatrix}.
\end{eqnarray*}

In Appendix~\ref{sec:WangDerivation}, we derive the following
general-radix version of this algorithm:
\begin{theorem}
\label{thm:Wang}
\begin{eqnarray*}
\DCTIV_{2km} &=&
K^{2km}_k (K^{2m}_2 \otimes \DCTIV_k) Y^{2km}_m  \cdot (\DCTIII_m \otimes L^{2k}_2) (K^n_{2k})^T \\
&& \times I_k \otimes
\begin{pmatrix}
1\\
&L_2^{2(m-1)}\cdot I_{m-1}\otimes \DFT_2\\
&&1
\end{pmatrix}
(K^{2km}_{2m})^T .
\end{eqnarray*}
Here, $Y^{2km}_m$ is a $2$-sparse matrix specified in~\eqref{eq:Wang_Y}.

This factorization is obtained by inducing a subalgebra $\B=\gen{T_{2k}(x)}$
of an algebra $\A=\C[x]/T_{2km}(x)$ with transversal $t_{2j}(x)=V_j(x)$
and $t_{2j+1}(x)=W_j(x)(V_{2k-1}(x)-V_{2k}(x))/2$ for $0\leq j < k.$
\end{theorem}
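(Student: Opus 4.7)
The plan is to apply Theorem~\ref{thm:PT_via_induction} to $\A=\C[x]/T_{2km}(x)$ equipped with the basis $b=\mylist{V_0(x),\dots,V_{2km-1}(x)}$, for which the associated scaled polynomial transform is $\DCTIV_{2km}$ (Example~\ref{ex:DCT4}), using the subalgebra $\B=\gen{T_{2k}(x)}$ and the transversal given in the statement. First I identify $\B$: since $T_{2k}(\cos\theta)=\cos(2k\theta)$ and the roots of $T_{2km}(x)$ are $\alpha_i=\cos\tfrac{(2i+1)\pi}{4km}$, $0\leq i<2km$, the generator maps $\alpha$ onto the list $\beta=\mylist{\cos\tfrac{(2j+1)\pi}{2m} \mid 0\leq j<m}$, which are the roots of $T_m(y)$. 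Hence Theorem~\ref{thm:algebraic_structure_of_B} gives $\B\cong\C[y]/T_m(y)$, of dimension $m$.

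Next I verify that $T=\mylist{t_0,\dots,t_{2k-1}}$ is a transversal. Using the closed forms $V_n(\cos\theta)=\cos(n+\tfrac{1}{2})\theta/\cos(\tfrac{\theta}{2})$ and $W_n(\cos\theta)=\sin(n+\tfrac{1}{2})\theta/\sin(\tfrac{\theta}{2})$ together with the identity $V_{2k-1}(\cos\theta)-V_{2k}(\cos\theta)=2\sin(2k\theta)\tan(\tfrac{\theta}{2})$, a parity argument on the trigonometric numerators shows that neither $V_j(x)$ nor $W_j(x)(V_{2k-1}(x)-V_{2k}(x))/2$ vanishes at any $\alpha_i$ for $0\leq j<k$. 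By Theorem~\ref{thm:dimension_of_tB}, each $t_\ell(x)\B$ therefore has dimension $m$ with $q^{(\ell)}(y)=T_m(y)$, and the total dimension is $2k\cdot m=n$; combined with a full-rank check on the block-Vandermonde matrix $M'$ of Theorem~\ref{thm:existence_transversal}, this confirms $T$ is a valid transversal. Choosing in each summand $\C[y]/T_m(y)$ the Chebyshev basis $b^{(\ell)}=\mylist{T_0(y),\dots,T_{m-1}(y)}$ then identifies every $\PT_{b^{(\ell)},\beta^{(\ell)}}$, up to the standard output scaling, with $\DCTIII_m$.

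Applying Theorem~\ref{thm:PT_via_induction} then yields a raw factorization $\DCTIV_{2km}=\bigl(D_0M_0\mid\dots\mid D_{2k-1}M_{2k-1}\bigr)\cdot\bigl(I_{2k}\otimes\DCTIII_m\bigr)\cdot B$. To reshape this into the asserted form I would (a) expand $B$ using the Chebyshev product identities for $V_i(x)T_{2kj}(x)$ and $W_i(x)(V_{2k-1}(x)-V_{2k}(x))T_{2kj}(x)/2$, which produces the trailing $\DFT_2$ blocks wrapped between the permutations $L_2^{2(m-1)}$ and $(K^{2km}_{2m})^T$; (b) evaluate $D_\ell M_\ell$ at the $\alpha_i$ using the trigonometric closed forms for $V_j$ and $W_j$ and, after absorbing the $\DCTIV$ output scaling $\cos\tfrac{(i+1/2)\pi}{2(2km)}$, regroup the result via stride permutations into $K^{2km}_k(K^{2m}_2\otimes\DCTIV_k)Y^{2km}_m$; and (c) commute the middle direct sum past a stride permutation, converting $I_{2k}\otimes\DCTIII_m$ into $(\DCTIII_m\otimes L^{2k}_2)(K^n_{2k})^T$ and matching the remaining permutations on both sides.

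The main obstacle will be step (b): showing that the concrete entries of $\bigl(D_0M_0\mid\dots\mid D_{2k-1}M_{2k-1}\bigr)$, which encode the scaled values $V_j(\alpha_i)$ and $W_j(\alpha_i)(V_{2k-1}-V_{2k})(\alpha_i)/2$, combine with the $\DCTIV_{2km}$ output scaling to expose, inside each $2k\times 2k$ stride block, exactly the defining entries $\cos\tfrac{(2j+1)(2i+1)\pi}{4k}$ of $\DCTIV_k$, together with the residual sparse factor $Y^{2km}_m$ defined in~\eqref{eq:Wang_Y}. This is a direct generalization of the $k=1$ calculation of~\cite{Wang:84}; the complete bookkeeping of Chebyshev identities and of the permutations $K^{2km}_k$, $K^{2m}_2$, $K^n_{2k}$, $K^{2km}_{2m}$ is deferred to Appendix~\ref{sec:WangDerivation}.
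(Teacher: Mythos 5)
Your overall strategy — identify $\B=\gen{T_{2k}(x)}\cong\C[y]/T_m(y)$, verify the given transversal, invoke Theorem~\ref{thm:PT_via_induction}, and reshape — matches the paper's. But a genuine gap opens at the choice of bases in the summands, and it propagates into the reshaping step you call (b).

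The paper does not use a uniform basis in every $t_\ell(x)\B$. For the even summands $V_j(x)\B$ it uses a basis yielding $\DCTIII_m$, but for the odd summands $W_j(x)(V_{2k-1}(x)-V_{2k}(x))/2\cdot\B$ it deliberately uses the Chebyshev second-kind basis $\mylist{U_\ell(y)}_{0\leq\ell<m}$, which produces the \emph{scaled} $\pol{\DSTIII}_m$ rather than $\DCTIII_m$. The middle block of the raw factorization is therefore $I_k\otimes(\DCTIII_m\oplus\pol{\DSTIII}_m)$, not $I_{2k}\otimes\DCTIII_m$ as in your step. This mixed choice is not a stylistic preference: because the ambient basis of $\A$ is $\mylist{V_0(x),\dots,V_{2km-1}(x)}$, the Chebyshev product identities couple $t_{2j}(x)T_\ell(T_{2k}(x))$ and $t_{2j+1}(x)U_\ell(T_{2k}(x))$ into \emph{pairs} of $V$'s, which is exactly what gives the base-change matrix $B$ its block-$\DFT_2$ structure between $L_2^{2(m-1)}$ and $(K^{2km}_{2m})^T$. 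With $T_\ell(y)$ as the module basis on the odd summands, the vectors $t_{2j+1}(x)T_\ell(T_{2k}(x))$ do not expand into two $V$'s, so your $B$ will not be $2$-sparse in the asserted pattern, and your factorization would diverge from the statement's form already at the right-hand factor.

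Your step (b) then understates the work: in the paper, the matrix $M$ first decomposes into skew $\DCTIV_k$ and skew $\DSTIV_k$ blocks, and only \emph{after} substituting $\pol{\DSTIII}_m=\diag((-1)^j)\DCTIII_m J_m$ and $\DSTIV_k=\diag((-1)^j)\DCTIV_k J_k$ do the sine transforms disappear and the clean $K^{2km}_k(K^{2m}_2\otimes\DCTIV_k)Y^{2km}_m$ form emerge. By flattening everything to $\DCTIII_m$ up front you lose the bookkeeping that makes this conversion visible, and there is no argument that the blocks of $(D_0M_0\mid\cdots\mid D_{2k-1}M_{2k-1})$ will still collect into $\DCTIV_k$ stride blocks. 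Finally, the proposal explicitly defers the core computation to the very appendix containing the paper's proof, so the proposal does not stand on its own. To repair it you would need to either (i) adopt the paper's $U_\ell$ basis on the odd summands and carry the $\pol{\DSTIII}_m$, $\DSTIV_k$ factors through to the final conversion, or (ii) actually compute $B$ and $M$ under your uniform $T_\ell$ choice and demonstrate that the claimed $2$-sparse and $\DCTIV_k$ structures still appear (which is not obvious).
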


$\DCTIV_k$ requires $O(k\log k)$ operations, and
$\DCTIII_m$ requires $O(m\log m)$ operations~\cite{Pueschel:03a,Pueschel:08b}.
$Y^{2km}_m$ requires $3n$ operations, where $n=2km.$
Hence, the algorithm for $\DCTIV_{n}$ in Theorem~\ref{thm:Wang}
requires $O(n\log n)$ operations.


\section{Conclusion}
We have introduced a new approach to the factorization of polynomial
transforms $\PT_{b,\alpha}$ based on the decomposition of the
underlying regular module $\A=\M=\C[x]/p(x)$ into an induction. This
approach is in its most general form since the underlying
Theorem~\ref{thm:PT_via_induction} allows for arbitrary subalgebras.
Not every factorization based on this theorem yields a fast algorithm:
it depends on the computational costs of matrices $B$ and $M$ that occur
in its recursive application.

However, we have shown that the theorem produces at least two novel
general-radix algorithms for the DFT and a DCT. Both algorithms cannot
be obtained using the prior Corollary~\ref{cor:Decomposition}. In addition,
both generalize algorithms from the literature, which now become
the special cases of radix 2.

Equally important, we make another step towards a complete algebraic
theory of fast algorithms for polynomial transforms.


\mypar{Future work} In addition to the DFT, DCT, and DST, other
polynomial transforms have been studied. In particular, polynomial
transforms based on orthogonal polynomials have found applications in
such areas as function interpolation, data compression, and image
processing~\cite{Mandyam:96,Martens:90,Martens:90a}.  For practical
applications, fast algorithms for this class of polynomial transforms
are needed.  With the exception of DCT and DST, the fastest
algorithms, reported in the literature to date, require
$O(n\log^2{n})$ operations (in particular, more than $43n\log_2^2{n}$
for $n=2^k$)~\cite{Driscoll:97, Potts:98}. The question is whether our
approach can improve this bound for some or all of these transforms.


\appendix

\section{Proof of Theorem~\ref{thm:BritanakRao}}
\label{sec:BritanakRaoDerivation}
Consider $\A=\M=\C[x]/(x^{2km}-1)$, with basis $\mylist{ 1,x,\dots,x^{2km-1} }$
and $\alpha_k=\omega_{2km}^k$. The corresponding polynomial transform is $\DFT_{2km}$.

By Theorem~\ref{thm:algebraic_structure_of_B},
the polynomial $r(x)=(x^k+x^{-k})/2$ generates the subalgebra
$$
\B=\gen{r(x)} \cong \C[y]/2(y^2-1)U_{m-1}(y).
$$
If we choose
$\mylist{ T_\ell(y) }_{0\leq \ell < m+1}$
as the basis
, the polynomial transform is
$$
\Big[ T_\ell(\cos\frac{k\pi}{m}) \Big]_{0\leq k,\ell < m+1} = \DCTI_{m+1}.
$$

By Theorem~\ref{thm:algebraic_structure_of_tB},
the $\B$-module $(x^k-x^{-k})/2\cdot\B\cong\C[y]/U_{m-1}(y)$.
If we choose the basis
$\mylist{U_\ell(y)}_{0 \leq \ell < m-1}$,
then the polynomial transform is
\begin{equation}
\label{eq:polDSTI}
\Big[ U_\ell(\cos\frac{k\pi}{m}) \Big]_{0\leq k,\ell < m-1} =
\diag{\Big(
1/\sin\frac{(k+1)\pi}{m}
\Big)}_{0\leq k < m-2} \cdot \DSTI_{m-1}
=\pol{\DSTI}_{m-1}.
\end{equation}

Similarly,
the $\B$-module $x^j(x^k+1)/2\cdot\B\cong\C[y]/2(y-1)U_{m-1}(y)$ for any $1\leq j< k$.
If we choose the basis
$\mylist{V_\ell(y)}_{0 \leq \ell < m},$
then the polynomial transform is
\begin{equation}
\label{eq:polDCTII}
\Big[ V_\ell(\cos\frac{k\pi}{m}) \Big]_{0\leq k,\ell < m} =
\diag{ \Big(
1/\cos\frac{k\pi}{2m}
\Big)}_{0\leq k < m} \cdot \DCTII_m
=\pol{\DCTII}_m.
\end{equation}

Finally,
the $\B$-module $x^j(x^k-1)/2\cdot\B\cong\C[y]/2(y+1)U_{m-1}(y)$ for any $1\leq j< k$.
If we choose the basis
$\mylist{W_\ell(y)}_{0 \leq \ell < m},$
then the polynomial transform is
\begin{equation}
\label{eq:polDSTII}
\Big[ W_\ell(\cos\frac{(k+1)\pi}{m}) \Big]_{0\leq k,\ell < m} =
\diag{\Big( 1/\sin\frac{(k+1)\pi}{2m} \Big)}_{0\leq k < m} \cdot \DSTII_m
=\pol{\DSTII}_m.
\end{equation}

Using Theorem~\ref{thm:existence_transversal}, we can verify that
$t_0(x)=1,$ $t_1(x)=(x^k-x^{-k})/2,$ $t_{2j}(x)=x^j(x^k+1)/2,$ and $t_{2j+1}(x)=x^j(x^k-1)/2$ for $1\leq j < k,$
is a transversal of $\B$ in $\A$.
Hence, by Theorem~\ref{thm:PT_via_induction}, we obtain the factorization
\begin{eqnarray*}
\DFT_n &=& M  \Big( \DCTI_{m+1}\oplus \pol{\DSTI}_{m-1}
 \oplus I_{k-1}\otimes(\pol{\DCTII}_m \oplus \pol{\DSTII}_m) \Big)  B^{2km}_m.
\end{eqnarray*}
Here, $B^{2km}_m$ is the base change matrix from $\mylist{ x^\ell }_{0\leq \ell\leq n-1}$
to the concatenation of bases of $t_j(x)\B$, $0\leq j<2k,$ and by construction
\begin{equation}
\label{eq:BritRao_B}
B^{2km}_m =
\begin{pmatrix}
1\\
&I_{m-1}&&J_{m-1}\\
&&1\\
&I_{m-1}&&-J_{m-1}
\end{pmatrix}
\oplus
I_{k-1}
\otimes
\begin{pmatrix}
1&1\\
&&I_{m-1}&J_{m-1}\\
-1&1\\
&&I_{m-1}&-J_{m-1}\\
\end{pmatrix}
\cdot L_k^{2km}.
\end{equation}
$M$ is constructed as follows. Let
$$
M_0 =
\textbf{1}_k \otimes
\begin{pmatrix}
1\\
&I_{m-1}\\
&&1\\
&J_{m-1}
\end{pmatrix}.
$$
Let $M_0(j_0,\dots,j_\ell)$ be the subset of columns of $M_0$ with indices $j_0,\dots,j_\ell$;
and let $D_j=\diag{\Big(t_j(\alpha_i)\Big)}_{0\leq i< n},$ for $0\leq j<2k$.
Then
$$
M =
\begin{pmatrix}
D_0M_0\mid D_1M_1\mid D_2M_2\mid \dots \mid D_{2k-1}M_{2k-1}
\end{pmatrix},
$$
where $M_1=M_0(1,\dots,m-1)$; $M_{2j}=M_0(0,\dots,m-1)$ and $M_{2j+1}=M_0(1,\dots,m)$ for
$1\leq j<k$.
We can further rewrite $M$ as
$$
M = L_{k}^{2km} (I_{2m}\otimes \DFT_k) X^{2km}_m L_{2m}^{2km} (I_{m} \oplus Z^{-1}_{m} \oplus I_{2(k-1)m}).
$$
Here, matrix $X^{2km}_m$ has the structure
\begin{equation}
\label{eq:BritRao_X}
X^{2km}_m =
\begin{pmatrix}
I_k\\
& \oplus_{j=1}^{m-1} C_j & \oplus_{j=1}^{m-1} D_j \\
&&& F \\
& \oslash_{j=m+1}^{2m-1} C_j & \oslash_{j=1}^{m-1} D_j \\
\end{pmatrix},
\end{equation}
where
\begin{eqnarray*}
C_j &=&  1 \oplus \diag{\Big(\omega_{2km}^{j\ell}(\omega_{2m}^{j}+1)/2\Big)}_{1\leq \ell <k} , \\
D_j &=&  \Big( (\omega_{2m}^j-\omega_{2m}^{-j})/2 \Big)
\oplus \diag{\Big( \omega_{2km}^{j\ell}(\omega_{2m}^{j}-1)/2\Big)}_{1\leq \ell <k} , \\
F   &=& 1 \oplus \diag{\Big(-\omega_{2k}^j\Big)}_{1\leq j <k} .
\end{eqnarray*}

After the substitution of $\pol{\DSTI}_{m-1}$, $\pol{\DCTII}_m$, and $\pol{\DSTII}_m$ with
$\DSTI_{m-1}$, $\DCTII_m$, and $\DSTII_m$ using~(\ref{eq:polDSTI}-\ref{eq:polDSTII}),
and simplification, we obtain the factorization
\begin{eqnarray*}
\DFT_{2km} &=&
L_{k}^{2km} (I_{2m}\otimes \DFT_k) X^{2km}_m L_{2m}^{2km}
(I_{m} \oplus Z^{-1}_{m} \oplus I_{2(k-1)m}) D^{2km}_m \\
&& \cdot \Big( \DCTI_{m+1} \oplus \DSTI_{m-1}
\oplus I_{k-1} \otimes (\DCTII_m\oplus \DSTII_m) \Big)  B^{2km}_m,
\end{eqnarray*}
where $B^{2km}_m$ and $X^{2km}_m$ are defined in~\eqref{eq:BritRao_B} and
\eqref{eq:BritRao_X}, and
\begin{eqnarray}
\nonumber
D^{2km}_m &=&  I_{m+1} \oplus \diag{\Big(1/\sin\frac{(j+1)\pi}{m} \Big)}_{0\leq j< m-1} \\
\label{eq:BritRao_D}
&& \oplus I_{k-1} \otimes \Big(
\diag{\Big( 1/\cos\frac{j\pi}{2m} \Big)}_{0\leq j<m}
 \oplus \diag{\Big( 1/\sin\frac{(j+1)\pi}{2m} \Big)}_{0\leq j<m} \Big).
\end{eqnarray}

\section{Proof of Theorem~\ref{thm:Wang}}
\label{sec:WangDerivation}

Consider $\A=\M=\C[x]/2T_{2km}(x)$ with basis
$\mylist{ V_0(x),V_1(x),\dots,V_{2km-1}(x)}$. The corresponding polynomial
transform is
\begin{equation}
\label{eq:polDCTIV}
\diag{\Big(1/\cos\frac{(k+1/2)\pi}{4km}\Big)}_{0\leq k<2km} \cdot \DCTIV_{2km} = \pol{\DCTIV}_{2km}.
\end{equation}

By Theorem~\ref{thm:algebraic_structure_of_B},
the polynomial $r(x)=T_{2k}(x)$ generates the subalgebra
$$
\B=\gen{r(x)} \cong \C[y]/2T_m(y).
$$
By Theorem~\ref{thm:algebraic_structure_of_tB},
the $\B$-module $V_j(x)\B\cong\C[y]/2T_m(y)$ for any $0\leq j< k$.
If we choose the basis
$\mylist{V_\ell(y)}_{0 \leq \ell < m},$
then the polynomial transform is
$$
\Big[ T_\ell(\cos\frac{(k+1/2)\pi}{m}) \Big]_{0\leq k,\ell < m} = \DCTIII_m.
$$

Similarly,
the $\B$-module $W_j(x)(V_{2k-1}(x)-V_{2k}(x))/2\cdot\B\cong\C[y]/2T_{m}(y)$
for any $0\leq j< k$.
If we choose the basis
If we choose the basis
$\mylist{ U_\ell(y)}_{0 \leq \ell < m},$
then the polynomial transform is
\begin{equation}
\label{eq:polDSTIII}
\Big[ U_\ell(\cos\frac{(k+1/2)\pi}{m}) \Big]_{0\leq k,\ell < m-1} =
\diag{ \Big(1/\sin\frac{(k+1/2)\pi}{m} \Big)}_{0\leq k < m} \cdot \DSTIII_m
=\pol{\DSTIII}_m.
\end{equation}

We can verify using Theorem~\ref{thm:existence_transversal} that
$t_{2j}=V_j(x)$ and $t_{2j+1}=W_j(x)(V_{2k-1}(x)-V_{2k}(x))/2$ for $0\leq j < k,$
is a transversal of $\B$ in $\A$.
Hence, by Theorem~\ref{thm:PT_via_induction}, we obtain the decomposition
\begin{equation*}
\pol{\DCTIV}_{2km} = M
\left(
I_k\otimes(\pol{\DCTIII}_m \oplus \pol{\DSTIII}_m)
\right)
 B.
\end{equation*}
Here, $B$ is the base change matrix from $\mylist{ x^\ell }_{0\leq \ell\leq n-1}$
to the concatenation of bases of $t_j(x)\B$, $0\leq j<2k,$ and by construction
$$
B = I_k\otimes
\begin{pmatrix}
1\\
&L_2^{2(m-1)}\cdot I_{m-1}\otimes \DFT_2\\
&&1
\end{pmatrix}
(K^{2km}_{2m})^T.
$$
$M$ is constructed as follows. Let
$$
M_0 =
\textbf{1}_k \otimes
\begin{pmatrix}
I_m\\
J_m
\end{pmatrix}.
$$
Let $D_j=\diag{\Big(t_j(\alpha_i)\Big)}_{0\leq i< n}$ for $0\leq j<2k$.
Then
$$
M =
\begin{pmatrix}
D_0M_0\mid D_1M_0\mid D_2M_0\mid \dots \mid D_{2k-1}M_0
\end{pmatrix}.
$$

We can simplify matrix $M$. Let us introduce matrices
\begin{equation}
\label{eq:X_C4}
X_k^{(C4)}(r) = \begin{pmatrix}
c_0 &&& s_{k-1} \\
&\ddots & \iddots \\
&\iddots & \ddots \\
s_0 &&& c_{k-1}
\end{pmatrix},
\ \ X_k^{(S4)}(r) = \begin{pmatrix}
c_0 &&& -s_{k-1} \\
&\ddots & \iddots \\
&\iddots & \ddots \\
-s_0 &&& c_{k-1}
\end{pmatrix}.
\end{equation}
Here, $c_\ell=\cos\frac{(1-2r)(2\ell+1)\pi}{4k}$ and
$s_\ell=\sin\frac{(1-2r)(2\ell+1)\pi}{4k}.$
These matrices are used for the so-called
\emph{skew} DCT and DST~\cite{Pueschel:08b}.
Further, let us define $r^{(i)} = (2i+1)/(4m)$ and
$$
r^{(i)}_j = \begin{cases}
\frac{r^{(i)}+2j}{k},& \text{ if $j$ is even} \\
\frac{2-r^{(i)}+2j}{k},& \text{ if $j$ is odd} \\
\end{cases}
$$
for $0\leq j< \floor{\frac{k}{2}}.$
In case $k$ is odd, we also define $r^{(i)}_{k-1}=\frac{r^{(i)}-1}{k}+1$.
Finally, let us define diagonal matrices
\begin{eqnarray*}
D^{(C4)}_k(r^{(i)}) &=& \diag{\Big(1/\cos{(r^{(i)}_j\pi/2)} \Big)}_{0\leq j<k} , \\
D^{(S4)}_k(r^{(i)}) &=& \diag{\Big(\sin{(2kr^{(i)}_j\pi)}/\cos{(r^{(i)}_j\pi/2)} \Big)}_{0\leq j<k} .
\end{eqnarray*}
Then $M = K^n_k \widehat{M} L_{2m}^{n},$ where $\widehat{M} = $
$$
\begin{pmatrix}
\oplus_{i=0}^{m-1} D^{(C4)}_k(r^{(i)}) \DCTIV_k(r^{(i)}) X^{(C4)}_k(r^{(i)})
& \oplus_{i=0}^{m-1} D^{(S4)}_k(r^{(i)}) \DSTIV_k(r^{(i)}) X^{(S4)}_k(r^{(i)})   \\
\oslash_{i=m}^{2m-1} D^{(C4)}_k(r^{(i)}) \DCTIV_k(r^{(i)}) X^{(C4)}_k(r^{(i)})
& \oslash_{i=m}^{2m-1} D^{(S4)}_k(r^{(i)}) \DSTIV_k(r^{(i)}) X^{(S4)}_k(r^{(i)})
\end{pmatrix}.
$$

We can further simplify~\eqref{eq:polDCTIV} by substituting $\pol{\DCTIV}_{2km}$ and $\pol{\DSTIII}_m$
with $\DCTIV_{2km}$ $\DSTIII_m$ using~\eqref{eq:polDCTIV} and~\eqref{eq:polDSTIII}.
Then we use the equalities
\begin{eqnarray*}
X^{(C4)}_k(r)
&=&
X^{(S4)}_k(1-r),\\
\DSTIII_m
&=&
\diag{\Big( (-1)^j\Big)}_{0\leq j< m} \cdot
\DCTIII_m \cdot
J_m,\\
\DSTIV_k
&=&
\diag{\Big( (-1)^j \Big)}_{0\leq j< k} \cdot
\DCTIV_k \cdot
J_k,
\end{eqnarray*}
to obtain the decomposition
\begin{eqnarray*}
\DCTIV_{2km}
&=&
K^{2km}_k (K^{2m}_2 \otimes \DCTIV_k)
Y^{2km}_m (\DCTIII_m \otimes L^{2k}_2) (K^{2km}_{2k})^T \\
&&\cdot
I_k \otimes
\begin{pmatrix}
1\\
&L_2^{2(m-1)}\cdot I_{m-1}\otimes \DFT_2\\
&&1
\end{pmatrix}
(K^{2km}_{2m})^T,
\end{eqnarray*}
where
\begin{equation}
\label{eq:Wang_Y}
Y^{2km}_m =
\bigoplus_{j=0}^{m-1}
\begin{pmatrix}
X^{(C4)}_k(r^{(j)}) &   (-1)^j\cdot J_k \cdot X^{(C4)}_k(1-r^{(j)}) \\
X^{(C4)}_k(1-r^{(j)}) &   (-1)^{j+1}\cdot J_k \cdot X^{(C4)}_k(r^{(j)})
\end{pmatrix}
\end{equation}
and $X^{(C4)}_k(r)$ is defined in~\eqref{eq:X_C4}.


\bibliographystyle{siam}
\begin{small}
\bibliography{references}

\begin{thebibliography}{10}

\bibitem{Auslander:84}
{\sc L.~Auslander, E.~Feig, and S.~Winograd}, {\em {A}belian semi-simple
  algebras and algorithms for the discrete {F}ourier transform}, Advances in
  Applied Mathematics, 5 (1984), pp.~31--55.

\bibitem{Auslander:84a}
\leavevmode\vrule height 2pt depth -1.6pt width 23pt, {\em The multiplicative
  complexity of the discrete {F}ourier transform}, Advances in Applied
  Mathematics, 5 (1984), pp.~87--109.

\bibitem{Bergland:68}
{\sc Glenn~D. Bergland}, {\em Numerical analysis: A fast {Fourier} transform
  algorithm for real-valued series}, Communications ACM, 11 (1968),
  pp.~703--710.

\bibitem{Beth:84}
{\sc Th. Beth}, {\em {V}erfahren der {S}chnellen {F}ouriertransformation
  [Methods for the Fast {F}ourier Transform]}, Teubner, 1984.

\bibitem{Beth:87}
\leavevmode\vrule height 2pt depth -1.6pt width 23pt, {\em {On the
  computational complexity of the general discrete {F}ourier transform}},
  Theoretical Computer Science, 51 (1987), pp.~331--339.

\bibitem{Boyd:2001}
{\sc J.P. Boyd}, {\em Chebyshev and Fourier Spectral Methods}, Dover, 2nd~ed.,
  2001.

\bibitem{Bracewell:84}
{\sc R.~N. Bracewell}, {\em The fast {H}artley transform}, Proc. IEEE, 72
  (1984), pp.~1010--1018.

\bibitem{Britanak:99}
{\sc V.~Britanak and K.~R. Rao}, {\em The fast generalized discrete {F}ourier
  transforms: A unified approach to the discrete sinusoidal transforms
  computation}, Signal Processing, 79 (1999), pp.~135--150.

\bibitem{Clausen:88}
{\sc M.~Clausen}, {\em {B}eitr{\"a}ge zum {E}ntwurf schneller
  {S}pektraltransformationen ({H}abilitationsschrift)}, Univ. Karlsruhe, 1988.

\bibitem{Clausen:93}
{\sc M.~Clausen and U.~Baum}, {\em Fast Fourier Transforms}, BI-Wiss.-Verl.,
  1993.

\bibitem{Cooley:65}
{\sc J.~W. Cooley and J.~W. Tukey}, {\em An algorithm for the machine
  calculation of complex {F}ourier series}, Math. of Computation, 19 (1965),
  pp.~297--301.

\bibitem{Curtis:62}
{\sc W.~C. Curtis and I.~Reiner}, {\em Representation Theory of Finite Groups},
  Interscience, 1962.

\bibitem{Diaconis:90}
{\sc P.~Diaconis and D.~Rockmore}, {\em Efficient computation of the {F}ourier
  transform on finite groups}, Amer. Math. Soc., 3(2) (1990), pp.~297--332.

\bibitem{Driscoll:97}
{\sc J.~R. Driscoll, D.~M. Healy~Jr., and D.~Rockmore}, {\em Fast discrete
  polynomial transforms with applications to data analysis for distance
  transitive graphs}, SIAM Journal Computation, 26 (1997), pp.~1066--1099.

\bibitem{Duhamel:86}
{\sc P.~Duhamel}, {\em Implementation of {"split-radix" FFT} algorithms for
  complex, real, and real-symmetric data}, IEEE Trans.~ASSP, 34 (1986),
  pp.~285--295.

\bibitem{Dummit:03}
{\sc D.~S. Dummit and R.~M. Foote}, {\em Abstract Algebra}, Wiley, 3rd~ed.,
  2003.

\bibitem{Fuhrman:96}
{\sc P.A. Fuhrman}, {\em A Polynomial Approach to Linear Algebra}, Springer
  Verlag, New York, 1996.

\bibitem{Heideman:86}
{\sc M.~T. Heideman and C.~S. Burrus}, {\em On the number of multiplications
  necessary to compute a length-$2^n$ {DFT}}, IEEE Trans. Acoust., Speech,
  Signal Proc., ASSP-34 (1986), pp.~91--95.

\bibitem{Johnson:85}
{\sc H.~W. Johnson and C.~S. Burrus}, {\em On the structure of efficient {DFT}
  algorithmss}, IEEE Trans. Acoust., Speech, Signal Proc., ASSP-33 (1985),
  pp.~248--254.

\bibitem{Kailath:97}
{\sc Th. Kailath and V.~Olshevsky}, {\em Displacement structure approach to
  polynomial {V}andermonde and related matrices}, Linear Algebra and
  Applications, 261 (1997), pp.~49--90.

\bibitem{Mallat:99}
{\sc S.~Mallat}, {\em A Wavelet Tour of Signal Processing}, Academic Press,
  1999.

\bibitem{Mandyam:96}
{\sc G.~Mandyam and N.~Ahmed}, {\em The discrete {L}aguerre transform:
  {D}erivation and applications}, IEEE Trans.~on Signal Processing, 44 (1996),
  pp.~2925--2931.

\bibitem{Martens:90a}
{\sc J.-B. Martens}, {\em The {H}ermite transform---applications}, IEEE
  Trans.~on Acoustics, Speech, and Signal Processing, 38 (1990),
  pp.~1607--1618.

\bibitem{Martens:90}
\leavevmode\vrule height 2pt depth -1.6pt width 23pt, {\em The {H}ermite
  transform---theory}, IEEE Trans.~on Acoustics, Speech, and Signal Processing,
  38 (1990), pp.~1595--1605.

\bibitem{Mason:02}
{\sc J.~C. Mason and D.~C. Handscomb}, {\em {C}hebyshev polynomials}, {C}hapman
  and {H}all/{CRC}, 2002.

\bibitem{Nicholson:71}
{\sc P.J. Nicholson}, {\em Algebraic theory of finite {F}ourier transforms},
  Journal of Computer and System Sciences, 5 (1971), pp.~524--547.

\bibitem{Nussbaumer:82}
{\sc H.~J. Nussbaumer}, {\em Fast {F}ourier Transformation and Convolution
  Algorithms}, Springer, 2nd~ed., 1982.

\bibitem{Potts:98}
{\sc D.~Potts, G.~Steidl, and M.~Tasche}, {\em Fast algorithms for discrete
  polynomial transforms}, Mathematics of Computation, 67 (1998),
  pp.~1577--1590.

\bibitem{Pueschel:05e}
{\sc M.~P{\"u}schel and J.~M.~F. Moura}, {\em Algebraic signal processing
  theory}.
\newblock available at http://arxiv.org/abs/cs.IT/0612077, parts of this
  manuscript appeared as \cite{Pueschel:08} and \cite{Pueschel:08a}.

\bibitem{Pueschel:03a}
\leavevmode\vrule height 2pt depth -1.6pt width 23pt, {\em The algebraic
  approach to the discrete cosine and sine transforms and their fast
  algorithms}, SIAM Journal of Computing, 32 (2003), pp.~1280--1316.

\bibitem{Pueschel:08a}
\leavevmode\vrule height 2pt depth -1.6pt width 23pt, {\em Algebraic signal
  processing theory: {1-D} space}, IEEE Transactions on Signal Processing, 56
  (2008), pp.~3586--3599.

\bibitem{Pueschel:08b}
\leavevmode\vrule height 2pt depth -1.6pt width 23pt, {\em Algebraic signal
  processing theory: {Cooley}-{Tukey} type algorithms for {DCT}s and {DST}s},
  IEEE Transactions on Signal Processing, 56 (2008), pp.~1502--1521.

\bibitem{Pueschel:08}
\leavevmode\vrule height 2pt depth -1.6pt width 23pt, {\em Algebraic signal
  processing theory: Foundation and {1-D} time}, IEEE Transactions on Signal
  Processing, 56 (2008), pp.~3572--3585.

\bibitem{Rao:90}
{\sc K.~R. Rao and P.~Yip}, {\em Discrete Cosine Transform: Algorithms,
  Advantages, Applications}, Academic Press, 1990.

\bibitem{Rockmore:94}
{\sc D.~Rockmore}, {\em Efficient computation of {F}ourier inversion for finite
  groups}, Assoc. Comp. Mach., 41 (1994), pp.~31--66.

\bibitem{Sorensen:85}
{\sc H.~V. Sorensen, D.~L. Jones, C.~S. Burrus, and M.~T. Heideman}, {\em On
  computing the discrete {H}artley transform}, IEEE Trans.~on Acoustics,
  Speech, and Signal Processing, ASSP-33 (1985), pp.~1231--1238.

\bibitem{Steidl:92}
{\sc G.~Steidl}, {\em Fast radix-p discrete cosine transform}, Appl. Algebra
  Engrg. Comm. Comp., 3 (1992), pp.~39--46.

\bibitem{Steidl:91}
{\sc G.~Steidl and M.~Tasche}, {\em A polynomial approach to fast algorithms
  for discrete {F}ourier-cosine and {F}ourier-sine transforms}, Mathematics of
  Computation, 56 (1991), pp.~281--296.

\bibitem{Tolimieri:97}
{\sc R.~Tolimieri, M.~An, and C.~Lu}, {\em Algorithms for Discrete Fourier
  Transforms and Convolution}, Springer, 2nd~ed., 1997.

\bibitem{VanLoan:92}
{\sc C.~Van~Loan}, {\em Computational Framework of the Fast {F}ourier
  Transform}, Siam, 1992.

\bibitem{Voronenko:09}
{\sc Y.~Voronenko and M.~P{\"u}schel}, {\em Algebraic signal processing theory:
  {Cooley}-{Tukey} type algorithms for real {DFT}s}, IEEE Trans. Signal Proc.,
  57 (2009), pp.~205--222.

\bibitem{Wang:84}
{\sc Z.~Wang}, {\em Fast algorithms for the discrete {W} transform and for the
  discrete {F}ourier transform}, IEEE Trans.~on Acoustics, Speech, and Signal
  Processing, ASSP-32 (1984), pp.~803--816.

\bibitem{Winograd:79}
{\sc S.~Winograd}, {\em On the multiplicative complexity of the discrete
  {F}ourier transform}, Advances in Mathematics, 32 (1979), pp.~83--117.

\end{thebibliography}
\end{small}
\end{document}